\let\oldbibliography\thebibliography
\renewcommand{\thebibliography}[1]{%
  \oldbibliography{#1}%
  \setlength{\itemsep}{1pt}%
}
\newdimen\arrowsize 
\newcommand\independent{\protect\mathpalette{\protect\independenT}{\perp}}
\def\independenT#1#2{\mathrel{\rlap{$#1#2$}\mkern2mu{#1#2}}}
\newtheorem{Defi}{Definition}
\newtheorem{lemma}[Defi]{Lemma}
\newtheorem{theorem}[Defi]{Theorem}
\newtheorem{corollary}[Defi]{Corollary}
\newtheorem{proposition}[Defi]{Proposition}
\title{Causal Discovery in the Presence of Measurement Error: Identifiability Conditions}
\author{{\small Kun Zhang$^\dagger$,~Mingming Gong$^{\star \dagger}$,~Joseph Ramsey$^\dagger$,~Kayhan Batmanghelich$^\star$,~Peter Spirtes$^\dagger$,~Clark Glymour$^\dagger$}\\ 
$^\dagger$Department of philosophy, Carnegie Mellon University\\
$^\star$Department of Biomedical Informatics, University of Pittsburgh
}
\begin{document}

\maketitle
\begin{abstract} Measurement error in the observed values of the variables can greatly change the output of various causal discovery methods. This problem has received much attention in multiple fields,  but it is not clear to what extent the causal model for the measurement-error-free variables can be identified in the presence of measurement error with unknown variance. In this paper, we study precise sufficient identifiability conditions for the measurement-error-free causal model and show what information of the causal model can be recovered from observed data. 
In particular, we present two different sets of identifiability conditions, based on the second-order statistics and higher-order statistics of the data, respectively. The former was inspired by the relationship between the generating model of the measurement-error-contaminated data and the factor analysis model, and the latter makes use of the identifiability result of the over-complete independent component analysis problem.

\end{abstract}





\section{Introduction}


Understanding and using  causal relations among variables of interest has been a fundamental problem in various fields, including biology, neuroscience, and social sciences. Since interventions or controlled randomized experiments are usually expensive or even impossible to conduct, discovering causal information from observational data, known as causal discovery~\citep{Spirtes00,Pearl00}, has been an important task and  received much attention in computer science, statistics, and philosophy. Roughly speaking, methods for causal discovery are categorized into  constraint-based ones, such as the PC algorithm~\citep{Spirtes00}, and score-based ones, such as Greedy Equivalence Search (GES)~\citep{chickering2002optimal}.

Causal discovery algorithms aim to find the causal relations among the observed variables. However, in many cases the measured variables are not identical to the variables we intend to measure. For instance, the measured brain signals may contain error introduced by the instruments, and in social sciences many variables are not directly measurable and one usually resorts to proxies (e.g., for ``regional security" in a particular area). In this paper, we assume that the observed variables $X_i$, $i=1,...,n$, are generated from the underlying measurement-noise-free variables  $\tilde{X}_i$  with additional random measurement errors ${E}_i$: 
\begin{equation} \label{Eq:contam}
X_i = \tilde{X}_i + E_i.
\end{equation}
Here we assume that the measurement errors $E_i$ are independent from $\tilde{X}_i$ and have non-zero variances. We call this model
the CAusal Model with Measurement Error (CAMME). 
Generally speaking, because of the presence of measurement errors, the d-separation patterns among $X_i$ are different from those among the underlying variables $\tilde{X}_i$. This generating process has been called the random measurement error model in ~\citep{Scheines16}. According to the causal Markov condition~\citep{Spirtes00,Pearl00}, observed variables $X_i$ and the underlying variables $\tilde{X}_i$ may have different conditional independence/dependence relations and, as a consequence, 
the output of constraint-based approaches to causal discovery is sensitive to such error, as demonstrated in~\citep{Scheines16}. Furthermore, because of the measurement error, the structural equation models according to which the measurement-error-free variables $\tilde{X}_i$ are generated usually  do not hold for the observed variables $X_i$. (In fact, $X_i$ follow error-in-variables models, for which the identifiability of the underlying causal relation is not clear.) Hence, approaches based on structural equation models, such as the linear, non-Gaussian, acyclic model (LiNGAM~\citep{Shimizu06}), 
will generally fail to find the correct causal direction and causal model.

In this paper, we aim to estimate the causal model underlying the measurement-error-free variables $\tilde{X}_i$ from their observed values $X_i$ contaminated by random measurement error. We assume linearity of the causal model and causal sufficiency relative to $\{\tilde{X}_i\}_{i=1}^n$. We particularly focus on the case where the causal structure for $\tilde{X}_i$ is represented by a Directed Acyclic Graph (DAG), although this condition can be weakened.  In order to develop principled causal discovery methods to recover the causal model for $\{\tilde{X}_i\}_{i=1}^n$ from observed values of $\{{X}_i\}_{i=1}^n$, we have to address theoretical issues include 
\begin{itemize}
\item whether the causal model of interest is completely or partially identifiable from the contaminated observations,
\item  what are the precise identifiability conditions, and
\item  what information in the measured data is essential for estimating the identifiable causal knowledge.
\end{itemize}
We make an attempt to answer the above questions on both theoretical and methodological sides.

One of the main difficulties in dealing with causal discovery in the presence of measurement error is because the variances of the measurement errors are unknown. Otherwise, if they are known, one can readily calculate the covariance matrix of the measurement-error-free variables $\tilde{X}_i$ and apply traditional causal discovery methods such as the PC ~\citep{Spirtes00} or GES~\citep{chickering2002optimal}) algorithm. 
It is worth noting that there exist causal discovery methods to deal with  confounders, i.e., hidden direct common causes, such as the Fast Causal Inference (FCI) algorithm~\citep{Spirtes00}. However, they cannot estimate the causal structure over the latent variables, which is what we aim to recover in this paper. \citep{Silva06} and \citep{Kummerfeld14} have provided algorithms for recovering latent variables and their causal relations when each latent variable has multiple measured effects.  Their problem is different from the measurement error setting we consider, where clustering for latent common causes is not required and each measured variable is the direct effect of a single "true" variable. Furthermore, as shown in next section, their models can be seen as special cases of our setting.


\section{Effect of Measurement Error on Conditional Independence / Dependence}

We use an example to demonstrate how measurement error changes the (conditional) independence and dependence relationships in the data. More precisely, we will see how the (conditional) independence and independence relations between the observed variables $X_i$ are different from those between the measurement-error-free variables $\tilde{X}_i$.  Suppose we observe $X_1$, $X_2$, and $X_3$, which are generated from measurement-error-free variables according to the structure given in Figure~\ref{fig:illust_C}. Clearly $\tilde{X}_1$ is dependent on $\tilde{X}_2$, while $\tilde{X}_1$ and $\tilde{X}_3$ are conditionally independent given $\tilde{X}_2$.
 One may consider general settings for the variances of the measurement errors. For simplicity, here let us assume that there is only measurement error in $X_2$, i.e., $X_1 = \tilde{X}_1$, $X_2 = \tilde{X}_2 + E_2$, and $X_3 = \tilde{X}_3$. 

\begin{figure}[htp]
\setlength{\abovecaptionskip}{0pt}
\setlength{\belowcaptionskip}{0.5pt}
  \begin{center}
\setlength{\abovecaptionskip}{-0.2pt}
\setlength{\belowcaptionskip}{0pt}
\begin{center}
\begin{tikzpicture}[scale=.7, line width=0.5pt, inner sep=0.2mm, shorten >=.1pt, shorten <=.1pt]
\draw (0, 0) node(1)  {{\footnotesize\,$\tilde{X}_1$\,}};
  \draw (1.8, 0) node(2) {{\footnotesize\,$\tilde{X}_2$\,}};
\draw (3.6, 0) node(3)  {{\footnotesize\,$\tilde{X}_3$\,}};
\draw (0, -1) node(4)  {{\footnotesize\,${X}_1$\,}};
  \draw (1.8, -1) node(5) {{\footnotesize\,${X}_2$\,}};
\draw (3.6, -1) node(6)  {{\footnotesize\,${X}_3$\,}};
  \draw[-arcsq] (2) -- (1); 
  \draw[-arcsq] (2) -- (3); 
  \draw[-arcsq] (1) -- (4); 
  \draw[-arcsq] (2) -- (5); 
  \draw[-arcsq] (3) -- (6); 
\end{tikzpicture} 
\end{center}
\caption{A linear CAMME to demonstrate the effect of measurement error on conditional independence and dependence relationships. For simplicity, we consider the special case where there is measurement error only
in $X_2$, i.e., $X_2 = \tilde{X}_2 + E_2$, but $X_1 = \tilde{X}_1$ and $X_3 = \tilde{X}_3$.}
\label{fig:illust_C}
  \end{center}
  \vspace{-10pt}
  \vspace{1pt}
\end{figure}
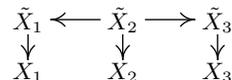

Let $\tilde{\rho}_{12}$ be the correlation coefficient between $\tilde{X_1}$ and $\tilde{X_2}$ and $\tilde{\rho}_{13,2}$ be the partial correlation coefficient between $\tilde{X_1}$ and $\tilde{X_3}$ given $\tilde{X}_2$, which is zero. Let ${\rho}_{12}$ and  ${\rho}_{13,2}$ be the corresponding correlation coefficient and partial correlation coefficient in the presence of measurement error. We also let $\tilde{\rho}_{12} = \tilde{\rho}_{23} = \tilde{\rho}$ to make the result simpler. So we have ${\rho}_{13} = \tilde{\rho}_{13} = \tilde{\rho}_{12} \tilde{\rho}_{23} = \tilde{\rho}^2$.  Let $\gamma = \frac{\mathbb{S}\textrm{td}(E_2)}{\mathbb{S}\textrm{td}(\tilde{X}_2)}$. For the data with measurement error,
\begin{flalign} \nonumber
{\rho}_{12} &=  \frac{\mathbb{C}\textrm{ov} (X_1,X_2)}{ \mathbb{V}\textrm{ar}^{1/2}(X_1) \mathbb{V}\textrm{ar}^{1/2}(X_2) } \\ \nonumber &= \frac{\mathbb{C}\textrm{ov} (\tilde{X}_1,\tilde{X}_2)}{ \mathbb{V}\textrm{ar}^{1/2}(\tilde{X}_1) (\mathbb{V}\textrm{ar}(\tilde{X}_2) + \mathbb{V}\textrm{ar}(E_2))^{1/2} } \\ \nonumber &= 
\frac{\tilde{\rho} }{(1 + \gamma^2)^{1/2}}; \\ \nonumber
 {\rho}_{13,2} &=  \frac{\rho_{13} - \rho_{12}\rho_{23} }{(1-\rho_{12}^2)^{1/2}(1-\rho_{23}^2)^{1/2}} \\ \nonumber &= \frac{ \tilde{\rho}_{13} - \frac{\tilde{\rho}_{12}\tilde{\rho}_{23} }{1+\gamma^2} } { \big( 1-\frac{\tilde{\rho}^2 }{(1 + \gamma^2)} \big)^{1/2}   \big(1-\frac{\tilde{\rho}^2 }{(1 + \gamma^2)} \big)^{1/2}    } 
\\ \nonumber &=  \frac{r^2 \tilde{\rho}^2}{1+\gamma^2 - \tilde{\rho}^2}.
\end{flalign}
As the variance of the measurement error in $X_2$ increases, $\gamma$ become larger, and $\rho_{12}$ decreases and finally goes to zero; in contrast, $ {\rho}_{13,2}$, which is zero for the measurement-error-free variables, is increasing and finally converges to $\tilde{\rho}^2$. See Figure \ref{fig:illustrate_rho}
for an illustration. In other words, in this example as the variance of the measurement error in $X_2$ increases, $X_1$ and $X_2$ become more and more independent, while $X_1$ and $X_3$ 
are conditionally more and more dependent given $X_2$. 
However, for the measurement-error-free variables, $\tilde{X}_1$ and $\tilde{X}_2$ are dependent
and $\tilde{X}_1$ and $\tilde{X}_3$ and conditionally independent given $\tilde{X}_2$. Hence, the structure given by constraint-based approaches to causal discovery on the observed variables can be very different from the causal structure over
measurement-error-free variables.


One might apply other types of methods instead of the constraint-based ones for causal discovery from data
with measurement error. In fact, as the measurement-error-free variables are not observable, $\tilde{X}_2$ in Figure \ref{fig:illust_C} is
actually a confounder for observed variables. As a consequence, generally speaking, due to the effect of the confounders, the independence noise assumption underlying functional causal model-based approaches,
such as the method based on the linear, non-Gaussian, acyclic model~\citep{Shimizu06}, will not hold for the observed variables any more. Figure~\ref{fig:illustrate_ind} gives an illustration on this. Figure~\ref{fig:illustrate_ind}(a) shows the scatter plot of $X_1$ vs. $X_2$
and the regression line from $X_2$ to $X_1$, where $\tilde{X}_2$, the noise in $\tilde{X}_1$, and the measurement error $E_2$, are all
uniformly distributed ($\rho = 0.4$, and $\gamma
 = 1.4$). As seen from Figure~\ref{fig:illustrate_ind}(b), the residual of regressing $X_1$ on $X_2$
is not independent from $X_2$, although the residual of regressing $\tilde{X}_1$ on $\tilde{X}_2$ is independent from $\tilde{X}_2$. As a result, the functional causal model-based approaches to causal discovery may also fail to find the causal structure of the measurement-error-free variables from their contaminated observations.

\begin{figure}[htp]
\centering
\includegraphics[width=6cm,height = 4.8cm,trim={4.2cm 9.3cm 5cm 9.5cm},clip]{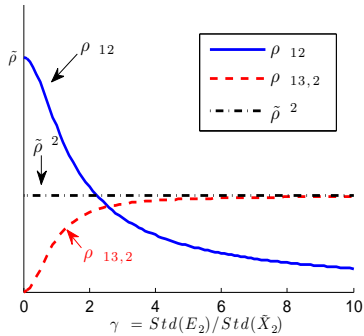}
 \caption{The correlation coefficient $\rho_{12}$ between $X_1$ and $X_2$ and partial correlation coefficient $\rho_{13,2}$ between $X_1$ and $X_3$ given $X_2$ as functions of $\gamma$, the ratio of the standard deviation of measurement error to the that of $\tilde{X}_2$. We have assumed that the correlation coefficient between $\tilde{X}_1$ and $\tilde{X}_2$ and that between $\tilde{X}_2$ and $\tilde{X}_3$ are the same (denoted by $\tilde{\rho}$), and that there is measurement error only in $X_2$.}
\label{fig:illustrate_rho}
\end{figure}
\begin{figure}[ht] 
\centering
\includegraphics[width=5.5cm,height = 4.3cm,trim={4.2cm 9.3cm 5cm 9.5cm},clip]{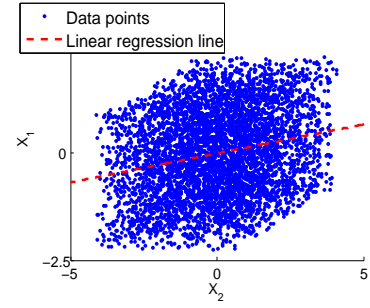}\\ (a) \\
\includegraphics[width=5.5cm,height = 4.3cm,trim={4.2cm 9.3cm 5cm 9.5cm},clip]{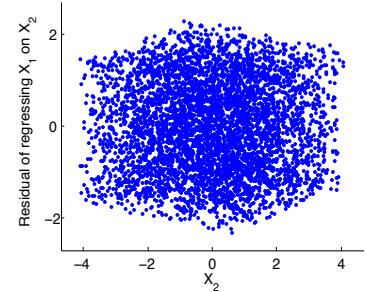}\\ (b) \\
 \caption{Illustration on how measurement error leads to dependence between regression residual and
contaminated cause. (a) Scatter plot of $X_2$ and $X_1$ with measurement error in $X_2$ together with the regression
line. (b) Scatter plot of the regression residual and $X_2$. Note that if we regress $\tilde{X}_1$ on $\tilde{X}_2$, the residual is
independent from $\tilde{X}_2$.}
\label{fig:illustrate_ind}
\end{figure}


\section{Canonical Representation of Causal Models with Measurement Error}

Let $\tilde{G}$ be the acyclic causal model over $\tilde{X}_i$. Here we call it {\it measurement-error-free causal model}.  
Let $\mathbf{B}$ be the corresponding causal adjacency matrix for $\tilde{X}_i$, in which $B_{ij}$ is the coefficient of the direct causal influence from $\tilde{X}_j$ to $\tilde{X}_i$ and $B_{ii} = 0$. We have,
\begin{equation}\label{Eq:B}
\tilde{\mathbf{X}} = \mathbf{B} \tilde{\mathbf{X}} + \tilde{\mathbf{E}},
\end{equation}
where the components of $\tilde{\mathbf{E}}$, $\tilde{E}_i$, have non-zero, finite variances.  
Then $\tilde{\mathbf{X}}$ is actually a linear transformation of the error terms in $\tilde{\mathbf{E}}$ because (\ref{Eq:B}) implies
\begin{equation}\label{Eq:A}
\tilde{\mathbf{X}} = \underbrace{(\mathbf{I} - \mathbf{B})^{-1}}_{\triangleq \mathbf{A}}\tilde{\mathbf{E}}.
\end{equation}

Now let us consider two types of nodes of $\tilde{G}$, namely, leaf nodes (i.e., those that do not influence any other node) and non-leaf nodes. Accordingly, the noise term in their structural equation models also has distinct behaviors: If $\tilde{X}_i$ is a leaf node, then $\tilde{E}_i$ influences only $\tilde{X}_i$, not any other; otherwise $\tilde{E}_i$ influences $\tilde{X}_i$ and at least one other variable, $\tilde{X}_j$, $j\neq i$. Consequently, we can decompose the noise vector into two groups: $\tilde{\mathbf{E}}^L$ consists of the $l$ noise terms that influence only leaf nodes, and $\tilde{\mathbf{E}}^{\mathrm{\mathrm{NL}}}$ contains the remaining noise terms.  Equation (\ref{Eq:A}) can be rewritten as
\begin{equation} \label{A2}
\tilde{\mathbf{X}} = \mathbf{A}^{\mathrm{\mathrm{NL}}} \tilde{\mathbf{E}}^{\mathrm{\mathrm{NL}}} + \mathbf{A}^{L} \tilde{\mathbf{E}}^{L} = \tilde{\mathbf{X}}^*  + \mathbf{A}^{L} \tilde{\mathbf{E}}^{L},
\end{equation}
where $\tilde{\mathbf{X}}^* \triangleq \mathbf{A}^{\mathrm{\mathrm{NL}}} \tilde{\mathbf{E}}^{\mathrm{\mathrm{NL}}}$, 
 $\mathbf{A}^{\mathrm{\mathrm{NL}}}$ and $\mathbf{A}^{L}$ are $n \times (n-l)$ and $n \times l$ matrices, respectively. Here both $A^L$ and $\mathbf{A}^{\mathrm{\mathrm{NL}}} $ have specific structures. All entries of $\mathbf{A}^L$ are 0 or 1; for each column of $\mathbf{A}^L$, there is only one non-zero entry. In contrast, each column of $\mathbf{A}^{\mathrm{\mathrm{NL}}}$ has at least two non-zero entries, representing the influences from the corresponding non-leaf noise term.

Further consider the generating process of observed variables $X_i$. Combining (\ref{Eq:contam}) and (\ref{A2}) gives
\begin{flalign}  \nonumber
\mathbf{X} &= \tilde{\mathbf{X}}^*  + \mathbf{A}^{L} \tilde{\mathbf{E}}^{L} + \mathbf{E}
=  \mathbf{A}^{\mathrm{\mathrm{NL}}} \tilde{\mathbf{E}}^{\mathrm{\mathrm{NL}}} + ( \mathbf{A}^{L} \tilde{\mathbf{E}}^{L} + \mathbf{E} ) \\ \label{Eq:FA}
&= \mathbf{A}^{\mathrm{\mathrm{NL}}} \tilde{\mathbf{E}}^{\mathrm{\mathrm{NL}}} + \mathbf{E}^* \\ \label{Eq:ICA}
& = \left[
    \begin{array}{c|c}
        \mathbf{A}^{\mathrm{\mathrm{NL}}} & \mathbf{I}    \end{array}
\right] \cdot \left[
    \begin{array}{c}
        \tilde{\mathbf{E}}^{\mathrm{\mathrm{NL}}}\\ \hline \vspace{-.27cm} \\
        \mathbf{E}^*
    \end{array}
\right],
\end{flalign}
where $\mathbf{E}^* = \mathbf{A}^{L} \tilde{\mathbf{E}}^{L} + \mathbf{E}$ and $\mathbf{I}$ denotes the identity matrix.  To make it more explicit, we give how $X_i^*$ and $E_i^*$ are related to the original CAMME process:
\begin{flalign} \label{newE}
\tilde{X}_i^* &= \begin{cases}
                        \tilde{X}_i, &\text{if $\tilde{X}_i$ is not a leaf node in $\tilde{G}$}; \\
                        \tilde{X}_i - \tilde{E}_i, &\text{otherwise;}
                    \end{cases},~\textrm{and}~ 
\\ \nonumber E_i^* &= \begin{cases}
                        E_i, &\text{if $\tilde{X}_i$ is not a leaf node in $\tilde{G}$}; \\
                        E_i + \tilde{E}_i, &\text{otherwise.}
                    \end{cases}
\end{flalign}
Clearly $E^*_i$s are independent across $i$, and as we shall see in Section~\ref{Sec:ident}, the information shared by difference $X_i$ is still captured by $\tilde{\mathbf{X}}^*$.

\begin{proposition}
For each \textcolor{black}{CAMME} specified by (\ref{Eq:B}) and (\ref{Eq:contam}), there always exists an observationally equivalent representation in the form of (\ref{Eq:FA}) or (\ref{Eq:ICA}),
\end{proposition}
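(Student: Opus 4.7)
The proof is essentially by explicit construction: I would show that the chain of algebraic manipulations already laid out in Section~3 is justified, and then verify that the resulting representation (\ref{Eq:FA})--(\ref{Eq:ICA}) meets the independence requirements so that it is genuinely observationally equivalent to the original CAMME in (\ref{Eq:B}) and (\ref{Eq:contam}).

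First I would justify the transition from (\ref{Eq:B}) to (\ref{Eq:A}). Since $\tilde{G}$ is a DAG, after a topological ordering $\mathbf{B}$ is strictly lower triangular, so $\mathbf{I}-\mathbf{B}$ is lower triangular with unit diagonal and hence invertible; this gives $\tilde{\mathbf{X}}=\mathbf{A}\tilde{\mathbf{E}}$ with $\mathbf{A}=(\mathbf{I}-\mathbf{B})^{-1}$. The $j$-th column of $\mathbf{A}$ encodes the total effect of $\tilde{E}_j$ on each $\tilde{X}_i$ along directed paths from $j$ to $i$. Thus if $\tilde{X}_j$ is a leaf node, $\tilde{E}_j$ reaches only $\tilde{X}_j$ itself, so the $j$-th column of $\mathbf{A}$ has a single nonzero entry, equal to $1$, in row $j$; if $\tilde{X}_j$ is a non-leaf node, $\tilde{E}_j$ reaches $\tilde{X}_j$ and at least one descendant, so the $j$-th column has at least two nonzero entries. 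Splitting the columns of $\mathbf{A}$ according to this leaf/non-leaf dichotomy yields the block decomposition in (\ref{A2}), together with the stated structural properties of $\mathbf{A}^L$ and $\mathbf{A}^{\mathrm{NL}}$.

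Next I would substitute (\ref{A2}) into (\ref{Eq:contam}) to obtain (\ref{Eq:FA}), defining $\mathbf{E}^*=\mathbf{A}^L\tilde{\mathbf{E}}^L+\mathbf{E}$. The componentwise description in (\ref{newE}) then follows from the column structure of $\mathbf{A}^L$ established above: the sole nonzero entry in the column for leaf $\tilde{X}_i$ sits in row $i$ and equals $1$, so $\mathbf{A}^L\tilde{\mathbf{E}}^L$ contributes $\tilde{E}_i$ to $E_i^*$ exactly when $\tilde{X}_i$ is a leaf, and $0$ otherwise. The rewriting (\ref{Eq:ICA}) is then immediate from stacking the sources.

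The main content of the proof is verifying that $E_i^*$ and $E_j^*$ are independent for $i\neq j$, so that the new representation is a bona fide noisy-mixing model. By construction each $E_i^*$ is built from $E_i$ (and, if $\tilde{X}_i$ is a leaf, also $\tilde{E}_i$). The measurement errors $\{E_k\}$ are mutually independent by the standing CAMME assumption; the structural noises $\{\tilde{E}_k\}$ are mutually independent by the acyclic SEM assumption; and the two families are mutually independent because each $E_k$ is independent of all $\tilde{X}_k$ and hence of all $\tilde{E}_k$. Since $E_i^*$ and $E_j^*$ use disjoint subsets of these noises (indexed by $i$ versus $j$), they are independent. Finally, observational equivalence is automatic: (\ref{Eq:FA}) and (\ref{Eq:ICA}) are obtained from (\ref{Eq:B}) and (\ref{Eq:contam}) by algebraic rearrangement only, so the induced distribution on $\mathbf{X}$ is identical. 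The step most worth being careful about is the one about column structure of $\mathbf{A}$ tied to leaf/non-leaf status, since it underpins both the form of $\mathbf{E}^*$ and the independence argument; everything else is bookkeeping.
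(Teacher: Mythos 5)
Your proposal is correct and follows essentially the same route as the paper, whose entire proof is the observation that the construction of (\ref{Eq:FA})--(\ref{Eq:ICA}) in Section~3 (the decomposition $\mathbf{A}=[\mathbf{A}^{\mathrm{NL}}\,|\,\mathbf{A}^{L}]$, the definition $\mathbf{E}^*=\mathbf{A}^{L}\tilde{\mathbf{E}}^{L}+\mathbf{E}$, and the componentwise form (\ref{newE})) already establishes the claim. You merely make explicit the steps the paper leaves implicit --- invertibility of $\mathbf{I}-\mathbf{B}$ for a DAG, the column structure of $\mathbf{A}^{L}$, and the mutual independence of the $E_i^*$ --- all of which are correct.
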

The proof was actually given in the construction procedure of the representation (\ref{Eq:FA}) or (\ref{Eq:ICA}) from the original CAMME. 
We call the representation (\ref{Eq:FA}) or (\ref{Eq:ICA}) the canonical representation of the underlying CAMME (CR-CAMME). 

\paragraph{Example Set 1}
Consider the following example with three observed variables $X_i$, $i=1,2,3$, for which $\tilde{X}_1 \rightarrow \tilde{X}_2 \leftarrow \tilde{X}_3$, with causal relations $\tilde{X}_2 = a \tilde{X}_1 + b\tilde{X}_3 +\tilde{E}_2$.  That is, 
$$\mathbf{B} = \begin{bmatrix} 0 & 0 & 0 \\  a & 0 & b \\ 0 & 0 &0 
\end{bmatrix}, $$ 
and according to (\ref{Eq:A}), 
$$\mathbf{A} = \begin{bmatrix} 1 & 0 & 0 \\  a & 1 & b \\ 0 & 0 &1 
\end{bmatrix}.$$ 
Therefore, 
\begin{flalign} \nonumber
\mathbf{X} &= \tilde{\mathbf{X}} + \mathbf{E} = \tilde{\mathbf{X}}^* + \mathbf{E}^*
\\ \nonumber
&= \begin{bmatrix} 1  & 0 \\  a  & b \\ 0  &1 
\end{bmatrix} \cdot \begin{bmatrix} \tilde{E}_1 \\ \tilde{E}_3
\end{bmatrix}  + \begin{bmatrix} E_1 \\  \tilde{E}_2 + E_2 \\ E_3 
\end{bmatrix} \\ \nonumber
&= \begin{bmatrix} 1  & 0 &\vline & 1 & 0 &0 \\  a  & b &\vline & 0 & 1 &0 \\ 0  &1 &\vline & 0 & 0 & 1 
\end{bmatrix} \cdot \begin{bmatrix} \tilde{E}_1 \\ \tilde{E}_3
\\ E_1 \\  \tilde{E}_2 + E_2 \\ E_3 
\end{bmatrix}.
\end{flalign}


In causal discovery from observations in the presence of measurement error, we aim to recover information of the measurement-error-free causal model $\tilde{G}$. 
Let us define a new graphical model, $\tilde{G}^*$. It is obtained by replacing variables $\tilde{X}_i$ in $\tilde{G}$ with variables $\tilde{X}_i^*$. In other words, it has the same causal structure and causal parameters (given by the $\mathbf{B}$ matrix) as $\tilde{G}$, but its nodes correspond to variables $\tilde{X}^*_i$. If we manage to estimate the structure of and involved causal parameters in $\tilde{G}^*$, then $\tilde{G}$, the causal model of interest, is recovered. Comparing with $\tilde{G}$,  $\tilde{G}^*$ involves some deterministic causal relations because each leaf node is a deterministic function of its parents (the noise in leaf nodes has been removed; see (\ref{newE})). We defined the graphical model $\tilde{G}^*$ because we cannot fully estimate the distribution of measurement-error-free variables $\tilde{\mathbf{X}}$, but might be able to estimate that of $\tilde{\mathbf{X}}^*$, under proper assumptions.

In what follows, most of the time we assume
\begin{itemize}
\item[A0.] The causal Markov condition holds for $\tilde{G}$ and the distribution of $\tilde{X}_i^*$ is {\it non-deterministically faithful} w.r.t. $\tilde{G}^*$, in the sense that if there exists $\mathbf{S}$, a subset of $\{\tilde{X}_k^*\,:\, k\neq i, k\neq j\}$, such that neither of $\tilde{X}_i^*$ and $\tilde{X}_j^*$ is a deterministic function of $\mathbf{S}$ and $\tilde{X}_i^* \independent \tilde{X}_j^* \,|\, \mathbf{S}$ holds, then $\tilde{X}_i^*$ and $\tilde{X}_j^*$ (or $\tilde{X}_i$ and $\tilde{X}_j$) are d-separated by $\mathbf{S}$ in $\tilde{G}^*$.
\end{itemize}
This non-deterministically faithfulness assumption excludes a particular type of parameter coupling in the causal model for $\tilde{X}_i$. in Figure~\ref{fig:faithful} we give a causal model in which the causal coefficients are carefully chosen so that this assumption is violated: because $\tilde{X}^*_3 = a \tilde{X}^*_1 + b \tilde{X}^*_2$ and $\tilde{X}^*_4 = 2a \tilde{X}^*_1 + 2b \tilde{X}^*_2 + E_4^*$, we have $\tilde{X}^*_4 = 2\tilde{X}^*_3 +  E_4^*$, implying $\tilde{X}_4^* \independent \tilde{X}_1^* \,|\, \tilde{X}_3^*$ and $\tilde{X}_4^* \independent \tilde{X}_2^* \,|\, \tilde{X}_3^*$, which are not given by the causal Markov condition on $\tilde{G}$. We note that this non-deterministic faithfulness is defined for the distribution of the constructed variables $\tilde{X}_i^*$, not the measurement-error-free variables $\tilde{X}_i$. (Bear in mind their relationship given in (\ref{newE}).) This assumption is generally stronger than the faithfulness assumption for the distribution of $\tilde{X}_i$. In particular, in the causal model given in Figure~\ref{fig:faithful}, the distribution of $\tilde{X}_i$ is still faithful w.r.t. $\tilde{G}$. Below we call 
the conditional independence relationship between $\tilde{X}_i^*$ and $\tilde{X}_j^*$ given $\mathbf{S}$  where 
neither of $\tilde{X}_i^*$ and $\tilde{X}_j^*$ is a deterministic function of $\mathbf{S}$ {\it non-deterministic conditional independence}.
 
\begin{figure}[h]
\centering
{\hspace{0cm}\begin{tikzpicture}[scale=.7, line width=0.5pt, inner sep=0.2mm, shorten >=.1pt, shorten <=.1pt]
\draw (0, 0) node(2)  {{\footnotesize\,$\tilde{X}_4$\,}};
  \draw (-2.6, 0) node(1) {{\footnotesize\,$\tilde{X}_2$\,}};
\draw (2.6, 0) node(3)  {{\footnotesize\,$\tilde{X}_5$\,}};
\draw (-5.2, 0) node(5) {{\footnotesize\,$\tilde{X}_1$\,}};
\draw (-3.9, -1.5) node(6) {{\footnotesize\,$\tilde{X}_3$\,}};
  \draw[-arcsq] (1) -- (2); \node[above] at (-1.3,0) {{$2b$}};
  \draw[-arcsq] (2) -- (3); \node[above] at (1.3,0) {{$d$}};
        \draw[-arcsq] (5) -- (1); \node[above] at (-3.9,0) {{$c$}};
            \draw[-arcsq] (5) -- (6); \node[above] at (-4.5,-0.7) {{$a$}};
                        \draw[-arcsq] (1) -- (6); \node[above] at (-3.3,-.7) {{$b$}};
                        \draw [-arcsq] (5) to[bend left=25]  (2); \node[above] at (-2.6,0.79) {{$2a$}};
\end{tikzpicture}} 
\caption{A causal model in which $\tilde{X}^*_i$ are not {\it non-deterministically faithful} w.r.t. $\tilde{G}$ because of parameter coupling.}
\label{fig:faithful}
\end{figure}
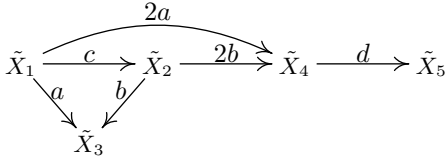

Now we have two concerns. One is whether essential information of the CR-CAMME is identifiable from observed values of $\mathbf{X}$.  We are interested in finding the causal model for (or a particular type of dependence structures in) $\tilde{\mathbf{X}}$. The CR-CAMME of $\mathbf{X}$, given by (\ref{Eq:FA}) or (\ref{Eq:ICA}), has two terms, $\tilde{\mathbf{X}}^*$ and $\tilde{\mathbf{E}}^*$.  The latter is independent across all variables, and the former preserves major information of the dependence structure in $\tilde{\mathbf{X}}$. Such essential information of the CR-CAMME may be the covariance matrix of $\tilde{\mathbf{X}}^*$ or the matrix $\mathbf{A}^{\mathrm{\mathrm{NL}}}$, as discussed in next sections. 
In the extreme case, suppose such information is not identifiable at all, then it is hopeless to find the underlying causal structure of $\tilde{G}$. The other is what information of the original CAMME, in particular, the causal model over the measurement-error-free variables, can be estimated from the above identifiable information of the CR-CAMME. Although the transformation from the original CAMME to a CR-CAMME is straightforward, without further knowledge there does not necessarily exist a unique CAMME corresponding to a given CR-CAMME: first, the CR-CAMME does not tell us which nodes $\tilde{X}_i$ are leaf nodes in $\tilde{G}$; second, even if $\tilde{X}_i$ is known to be a leaf node, it is impossible to separate the measurement error $E_i$ from the noise $\tilde{E}_i$ in $E^*_i$. Fortunately, we are not interested in everything of the original CAMME, but only the causal graph $\tilde{G}$ and the corresponding causal influences $\mathbf{B}$.

Accordingly, in the next sections we will explore what information of the CR-CAMME is identifiable from the observations of $\mathbf{X}$ and how to further reconstruct necessary information of the original CAMME.  
%
%
%
In the measurement error model (\ref{Eq:contam}) we assumed that each observed variable $X_i$ is generated from its own latent variable $\tilde{X}_i$. We note that in case multiple observed variables are generated from a single latent variable or a single observed variable is generated by multiple latent variables (see, e.g.,~\citep{Silva06}), we can still use the CR-CAMME to represent the process. In the former case, certain rows of $\mathbf{A}^{\mathrm{\mathrm{NL}}}$ are identical. For instance, if $X_1$ and $X_2$ are generated as noisy observations of the same latent variable, then in (\ref{Eq:FA}) the first two rows of $\mathbf{A}_{\mathrm{\mathrm{NL}}}$ are identical. (More generally, if one allows different coefficients to generate them from the latent variable, the two rows are proportional to each other.) Then let us consider an example in the latter case. Suppose $X_3$ is generated by latent variables $\tilde{X}_1$ and $\tilde{X}_2$, for each of which there is also an observable counterpart. Write the causal model as $X_3 = f(\tilde{X}_1, \tilde{X}_2) + E_3$ and introduce the latent variable $\tilde{X}_3 = f(\tilde{X}_1, \tilde{X}_2)$, and then we have $X_3 = \tilde{X}_3 + E_3$.  The CR-CAMME formulation then follows.

\section{Identifiability with Second Order Statistics} \label{Sec:ident}


The CR-CAMME (\ref{Eq:FA}) has a form of the factor analysis model (FA)~\citep{Everitt84}, which has been a fundamental tool in data analysis.  In its general form, FA assumes the observable random vector $\mathbf{X} = (X_1,X_2,...,X_n)^\intercal$ was generated by 
\begin{equation} \label{Eq:FA_model}
\mathbf{X} = \mathbf{Lf} + \mathbf{N},
\end{equation}
where the factors $\mathbf{f} = (f_1,...,f_r)^\intercal$ satisfies $\mathbb{C}\textrm{ov}(\mathbf{f}) = \mathbf{I}$, and noise terms, as components of $\mathbf{N}$, are mutually independent and also independent from $\mathbf{f}$. Denote by $\boldsymbol{\Psi}_N$ the covariance matrix of $\mathbf{N}$, which is diagonal.  The unknowns in (\ref{Eq:FA_model}) are the loading matrix $\mathbf{L}$ and the covariance matrix $\boldsymbol{\Psi}_N$. 

Factor analysis only exploits the second-order statistics, i.e., it assumes that all variables are jointly Gaussian. Clearly in FA $\mathbf{L}$ is not identifiable; it suffers from at least the right orthogonal transformation indeterminacy. However, under suitable conditions, some essential information of FA is generically identifiable, as given in the following lemma.


\begin{lemma} For the factor analysis model, when the number of factors $r < \phi(n) = \frac{2n+1 - (8n+1)^{1/2}}{2}$, the model is \textcolor{black}{generically} globally identifiable, in the sense that for randomly generated $(L, \boldsymbol{\Psi}_N)$ in (\ref{Eq:FA_model}), it is with only measure 0 that there exists another representation $(L', \boldsymbol{\Psi}'_N)$ such that $(L', \boldsymbol{\Psi}'_N)$ and $(L, \boldsymbol{\Psi}_N)$ generate the same covariance matrix for $\mathbf{X}$ and $\boldsymbol{\Psi}'_N \neq \boldsymbol{\Psi}_N$.
\end{lemma}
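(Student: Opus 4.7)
The strategy is to reduce the identifiability question to a dimension count on the variety of symmetric low-rank matrices, and then upgrade ``generically finite fibres'' to ``generically singleton fibres'' using a single witness of global uniqueness together with upper-semicontinuity of fibre cardinality.

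\emph{Reduction.} Two parameter pairs $(\mathbf{L},\boldsymbol{\Psi}_N)$ and $(\mathbf{L}',\boldsymbol{\Psi}'_N)$ yield the same covariance $\mathbf{\Sigma} = \mathbf{L}\mathbf{L}^\intercal + \boldsymbol{\Psi}_N$ iff $\mathbf{L}\mathbf{L}^\intercal - \mathbf{L}'\mathbf{L}'^\intercal = \boldsymbol{\Psi}'_N - \boldsymbol{\Psi}_N$ is diagonal. Since $\mathbf{L}$ itself is identifiable only up to a right orthogonal transformation $\mathbf{L}\mapsto \mathbf{L}\mathbf{Q}$, which leaves $M := \mathbf{L}\mathbf{L}^\intercal$ unchanged, the lemma reduces to the geometric statement: for generic rank-$r$ PSD matrix $M$, no other rank-$\leq r$ PSD matrix $M' \neq M$ satisfies $M - M' \in \mathrm{Diag}_n$.

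\emph{Dimension count.} The variety $\mathcal{V}_r$ of $n \times n$ symmetric PSD matrices of rank $\leq r$ has dimension $nr - r(r-1)/2$; $\mathrm{Diag}_n$ has dimension $n$; $\mathrm{Sym}_n$ has dimension $n(n+1)/2$. A short algebraic manipulation gives
\[
  r < \phi(n) \;\Longleftrightarrow\; (n-r)^2 > n+r \;\Longleftrightarrow\; \dim \mathcal{V}_r + \dim \mathrm{Diag}_n \;<\; \dim \mathrm{Sym}_n,
\]
which is the classical Ledermann bound. Under this strict inequality the polynomial map $\Phi:\mathcal{V}_r \times \mathrm{Diag}_n \to \mathrm{Sym}_n$, $\Phi(M,D)=M+D$, has image in a proper subvariety, and its differential has trivial kernel outside a Zariski-closed proper subset (namely the locus where $T_M\mathcal{V}_r \cap \mathrm{Diag}_n \neq \{0\}$, which is cut out by determinantal identities). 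Hence $\Phi$ is locally injective off that subset, and its generic fibre is $0$-dimensional.

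\emph{From discrete fibres to singletons, and main obstacle.} To rule out finitely many spurious alternatives $(M',D')$ for generic $(M,D)$, I would produce an explicit uniqueness witness: classical Anderson--Rubin-type sufficient conditions (rank conditions on submatrices of $\mathbf{L}$) provide parameter points with globally unique $\boldsymbol{\Psi}_N$ whenever $r \leq n-1$. Since the property ``fibre of $\Phi$ contains more than one point'' is constructible and, after a suitable compactification, Zariski-closed by upper-semicontinuity of fibre cardinality, the existence of one witness where it fails forces the entire non-identifiable locus to be a proper algebraic subvariety of the parameter space, hence of Lebesgue measure zero. The subtlest step is precisely this last transition: a pure dimension count cannot rule out a finite number of isolated impostor decompositions, so producing the uniqueness witness and phrasing the ``multiple-point fibre'' locus as a Zariski-closed (rather than merely constructible) condition is where the technical weight of the proof actually sits.
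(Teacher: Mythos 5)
The first thing to note is that the paper does not prove this lemma at all: it records it as a known result, formulated as a conjecture by Shapiro (1985) and proved by Bekker and ten~Berge (1997), so there is no in-paper argument to compare yours against. Your sketch is therefore an attempt to reprove a genuinely hard theorem from scratch. The parts you do carry out are correct: the reduction to the question of whether a generic $M\in\mathcal{V}_r$ admits a ``diagonal companion'' $M'\in\mathcal{V}_r$ with $M-M'$ a nonzero diagonal matrix, and the verification that $r<\phi(n)$ is exactly equivalent to $(n-r)^2>n+r$, i.e.\ to the strict Ledermann inequality $\dim\mathcal{V}_r+n<\dim\mathrm{Sym}_n$. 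You also correctly identify that the dimension count alone only controls the fibre dimension, not the fibre cardinality, and that the whole weight of the theorem sits in the passage from ``generically finite fibres'' to ``generically singleton fibres'' --- this is precisely why the statement remained a conjecture for over a decade after Shapiro's local result.

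However, the mechanism you propose for that final step does not close the gap. First, the Anderson--Rubin-type sufficient conditions you invoke require (roughly) two disjoint full-rank $r\times r$ submatrices of $\mathbf{L}$ after deleting any row, hence $n\ge 2r+1$; they do not furnish witnesses ``whenever $r\le n-1$''. Since $\phi(n)\approx n-\sqrt{2n}$ for large $n$, there is a large range $\tfrac{n-1}{2}<r<\phi(n)$ where the Ledermann bound holds but no such witness is available, so the argument cannot cover the full hypothesis of the lemma. Second, even granted a witness, one point outside the bad locus $B=\{(M,D):\exists\,(M',D')\text{ with }M'+D'=M+D,\ D'\ne D\}$ only shows that the constructible set $B$ is a proper subset; a constructible set can be dense with full measure while missing a point, so this does not place $B$ inside a proper Zariski-closed subvariety. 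Third, the semicontinuity you appeal to is not available: the relevant projection is neither finite nor proper ($M'$ can escape to infinity, and $\mathcal{V}_r$ is affine), and compactifying introduces boundary decompositions that change the problem. What is actually needed --- and what Bekker and ten~Berge supply --- is a bound on the dimension of \emph{every} irreducible component of the incidence variety $\{(M,M')\in\mathcal{V}_r^2: M-M'\in\mathrm{Diag}_n,\ M\ne M'\}$, ruling out excess-dimensional components off the diagonal that could dominate $\mathcal{V}_r$; a naive count gives only the expected dimension of this intersection, not an upper bound on all components. So the proposal correctly locates the obstacle but does not overcome it; for the purposes of this paper the honest proof remains the citation.
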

This was formulated as a conjecture by~\citep{Shapiro85}, and was later proven by~\citep{Bekker97}. This lemma immediately gives rise to the following {\it generic identifiability} of the variances of measurement errors.\footnote{\textcolor{black}{We note that this ``generic identifiability" is sightly weaker than what we want: we want to show that for certain $(\mathbf{L}, \boldsymbol{\Psi}_\mathbf{N})$ the model is necessarily identifiable. To give this proof is non-trivial and is a line of our future research.}}


\begin{proposition} \label{iden_2nd}
The variances of error terms $E_i^*$ and the covariance matrix of $\tilde{\mathbf{X}}^*$ in the CR-CAMME (\ref{Eq:FA}) are \textcolor{black}{generically} identifiable when the sample size $N\rightarrow \infty$ and the following assumption on the number of leaf nodes $l$ holds:
\begin{itemize}
\item[A1.] The number of leaf variables $l$ satisfies 
\begin{equation} \label{Eq:cn}
\frac{l}{n} > c(n) \triangleq \frac{(8n+1)^{1/2} -1 }{2n}.
\end{equation}
\end{itemize}
\end{proposition}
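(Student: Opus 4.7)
The plan is to recast the canonical representation (\ref{Eq:FA}) as a standard factor analysis model and then apply the preceding lemma. First, I would standardize the factors: let $\sigma_i = \mathbb{V}\textrm{ar}^{1/2}(\tilde{E}_i^{\mathrm{NL}})$, put $f_i = \tilde{E}_i^{\mathrm{NL}}/\sigma_i$, and absorb $\sigma_i$ into the $i$-th column of $\mathbf{A}^{\mathrm{NL}}$ to form a loading matrix $\mathbf{L}$. Because the components of $\tilde{\mathbf{E}}^{\mathrm{NL}}$ are mutually independent (they come from distinct structural noise terms in $\tilde{G}$) and the components of $\mathbf{E}^*$ are independent across $i$ (as already observed after (\ref{newE})) and independent of $\tilde{\mathbf{E}}^{\mathrm{NL}}$, the rewritten system
\begin{equation*}
\mathbf{X} = \mathbf{L}\mathbf{f} + \mathbf{E}^*, \qquad \mathbb{C}\textrm{ov}(\mathbf{f}) = \mathbf{I}, \qquad \mathbb{C}\textrm{ov}(\mathbf{E}^*) = \boldsymbol{\Psi}_N \text{ diagonal},
\end{equation*}
matches (\ref{Eq:FA_model}) exactly, with $\mathbf{L}\mathbf{L}^\intercal = \mathbb{C}\textrm{ov}(\tilde{\mathbf{X}}^*)$ and $\boldsymbol{\Psi}_N$ holding the variances of the $E_i^*$.

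Second, I would count the number of factors. Since $\tilde{\mathbf{E}}^{\mathrm{NL}}$ contains exactly one entry per non-leaf node of $\tilde{G}$, we have $r = n-l$. The lemma's condition $r < \phi(n) = \tfrac{1}{2}\bigl(2n+1 - (8n+1)^{1/2}\bigr)$ then reads
\begin{equation*}
n - l < \tfrac{1}{2}\bigl(2n + 1 - (8n+1)^{1/2}\bigr) \iff l > \tfrac{1}{2}\bigl((8n+1)^{1/2}-1\bigr) \iff \tfrac{l}{n} > c(n),
\end{equation*}
which is precisely assumption A1. Because $N\rightarrow\infty$ gives access to the population covariance matrix $\mathbb{C}\textrm{ov}(\mathbf{X})$, the lemma yields that, generically, the decomposition $\mathbb{C}\textrm{ov}(\mathbf{X}) = \mathbf{L}\mathbf{L}^\intercal + \boldsymbol{\Psi}_N$ uniquely pins down $\boldsymbol{\Psi}_N$, hence the variances of the $E_i^*$, and consequently $\mathbf{L}\mathbf{L}^\intercal = \mathbb{C}\textrm{ov}(\tilde{\mathbf{X}}^*)$ is identified as well.

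The main obstacle, and the only delicate step, is transferring the word ``generic'' from the lemma (which talks about randomly drawn $(\mathbf{L},\boldsymbol{\Psi}_N)$) back to the original CAMME parameters $(\mathbf{B}, \mathbb{C}\textrm{ov}(\tilde{\mathbf{E}}), \mathbb{C}\textrm{ov}(\mathbf{E}))$. The parameter map $(\mathbf{B}, \mathbb{C}\textrm{ov}(\tilde{\mathbf{E}}), \mathbb{C}\textrm{ov}(\mathbf{E})) \mapsto (\mathbf{L}, \boldsymbol{\Psi}_N)$ is a polynomial map between Euclidean spaces, so I would argue that the pre-image of a measure-zero set remains measure-zero provided this map is not degenerate on an open set. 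Checking this non-degeneracy amounts to verifying that the image of the map has positive Lebesgue measure in the FA parameter space, which follows from the fact that $\mathbf{A}^{\mathrm{NL}}$ inherits $n(n-l)$ essentially free entries from $\mathbf{B}$ while $\boldsymbol{\Psi}_N$ has $n$ free entries from the measurement-error and leaf-noise variances. This is also precisely the weakness the authors flag in their footnote, and making it fully quantitative (i.e., upgrading ``generic'' to an explicit non-degeneracy condition on $\mathbf{B}$) is the step I would expect to require the most care.
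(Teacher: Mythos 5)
Your proposal is correct and follows exactly the route the paper takes: it recasts the CR-CAMME as a factor analysis model with $r = n-l$ factors and invokes the Bekker--ten Berge generic identifiability lemma, and your algebra translating $r < \phi(n)$ into assumption A1 checks out. The genericity-transfer issue you flag at the end is precisely the caveat the authors themselves acknowledge in a footnote (and leave as future work), so your treatment is, if anything, slightly more explicit about that gap than the paper's own.
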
 

Clearly $c(n)$ is decreasing in $n$ and $c(n) \rightarrow 0$ as $n \rightarrow \infty$.   To give a sense how restrictive the above condition is,  Fig.~\ref{fig:illustrate_cn} shows how $c(n)$ changes with $n$.  In particular, when $n=4$, $c(n) = 59.3\%$, condition (\ref{Eq:cn}) implies the number of leaf nodes is $l > 2.4$; when $n=100$, $c(n) = 13.6\%$, condition (\ref{Eq:cn}) implies $l > 13.6$. Roughly speaking, as $n$ increases, it is more likely for condition (\ref{Eq:cn}) to hold. Note that the condition given in Proposition \ref{iden_2nd} is sufficient but not necessary for the identifiability of the noise variances and the covariance matrix of the non-leaf hidden variables~\citep{Bekker97}. 

\begin{figure}[ht]
\centering
\includegraphics[width=5cm,height = 4cm,trim={4.2cm 9.3cm 5cm 9.5cm},clip]{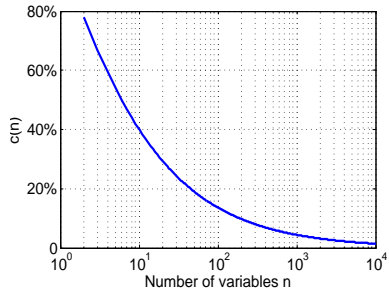}
 \caption{$c(n)$ as a function of $n$.}
\label{fig:illustrate_cn}
\end{figure}

Now we know that under certain conditions, the covariance matrices of $\mathbf{E}^*$ and $\tilde{\mathbf{X}}^*$ in the CR-CAMME (\ref{Eq:FA}) are (asymptotically) identifiable from observed data with measurement error. Can we recover the measurement-error-free causal model $\tilde{G}$ from them?


\subsection{Gaussian CAMME with the Same Variance For Measurement Errors}  \label{Sec:G_CAMME_same}
In many problems the variances of the measurement errors in different variables are roughly the same because the same instrument is used and the variables are measured in similar ways. For instance, this might approximately be the case for Functional magnetic resonance imaging (fMRI) recordings. In fact, if we made the following assumption on the measurement error, the underlying causal graph $\tilde{G}$ can be estimated at least up to the equivalence class, as shown in the following corollary.
\begin{itemize}
\item[A2.] The measurement errors in all observed variables have the same variance.
\end{itemize}

\begin{proposition} \label{coro1}
Suppose assumptions A0, A1, and A2 hold. Then as $N \rightarrow \infty$, $\tilde{G}$ can be estimated up to the equivalence class and, moreover, the leaf nodes of $\tilde{G}$ are identifiable. 
\end{proposition}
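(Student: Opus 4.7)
The plan is to leverage Proposition~\ref{iden_2nd} to obtain the noise-level and latent-covariance identifiability of the canonical representation, then use the equal-variance assumption A2 to separate leaf from non-leaf nodes via their $E_i^*$-variances, and finally apply a constraint-based search on the recovered covariance of $\tilde{\mathbf{X}}^*$ to identify the equivalence class of $\tilde{G}$.

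First, by A1 and Proposition~\ref{iden_2nd}, in the $N\to\infty$ limit the variances $\text{Var}(E_i^*)$ and the full covariance matrix $\mathbf{\Sigma}_{\tilde{\mathbf{X}}^*}$ are recovered from the observed covariance of $\mathbf{X}$. Combining (\ref{newE}) with A2 gives $\text{Var}(E_i^*) = \sigma^2$ when $\tilde{X}_i$ is non-leaf and $\text{Var}(E_i^*) = \sigma^2 + \text{Var}(\tilde{E}_i) > \sigma^2$ when it is a leaf, since the noise in the original structural equation has nonzero variance. Whenever at least one non-leaf exists (i.e., $\tilde{G}$ is not edgeless), $\sigma^2$ is identified as $\min_i \text{Var}(E_i^*)$ and the leaf set is identified as $\{i : \text{Var}(E_i^*) > \sigma^2\}$. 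This already yields the ``leaf nodes identifiable'' part of the claim.

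Second, I would use that $\tilde{G}^*$ has the same DAG and coefficient matrix $\mathbf{B}$ as $\tilde{G}$, with the sole difference being that every leaf node of $\tilde{G}^*$ is a deterministic linear combination of its parents; these parents are necessarily non-leaf, since a parent of a leaf has at least that leaf as a child. Under the Gaussianity implicit in this subsection, $\mathbf{\Sigma}_{\tilde{\mathbf{X}}^*}$ determines the full non-deterministic conditional independence pattern on $\tilde{\mathbf{X}}^*$. By A0 together with the causal Markov condition, this pattern coincides with the d-separation pattern of $\tilde{G}^*$ whenever neither endpoint is a deterministic function of the conditioning set. A PC-style procedure that skips any CI test rendered deterministic — the deterministic configurations being computable from the leaf set recovered in Step~1 and the inferred skeleton — then recovers the Markov equivalence class of $\tilde{G}^* = \tilde{G}$.

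The main obstacle is the last step: one must verify that discarding deterministic CI tests does not erase information needed to recover the equivalence class, particularly for unshielded triples whose middle vertex is a leaf, since conditioning on a leaf makes it a deterministic function of its parents and invalidates the test. The saving grace is that a leaf has no descendants, so d-separation of a pair that straddles a v-structure centered at a leaf can always be decided by conditioning on sets that exclude the leaf itself, and unshielded triples whose middle vertex is non-leaf are untouched by determinism. Turning this observation into a clean guarantee that the deterministic-aware PC procedure still outputs the correct CPDAG is the technical crux of the full proof.
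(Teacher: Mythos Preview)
Your first step---identifying the leaf set by comparing the recovered variances $\mathrm{Var}(E_i^*)$ to their minimum under A2---matches the paper exactly.

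The second step diverges, and the route you chose carries precisely the unresolved difficulty you flag. You propose running a deterministic-aware PC on the \emph{entire} set $\{\tilde X_i^*\}$ and then argue that skipped tests do not cost you the CPDAG. That is essentially the strategy the paper reserves for Proposition~\ref{Prop:equivalence}, where the leaf set is \emph{not} known, and there it requires the extra structural assumption A3 to guarantee that every non-adjacent pair admits a separating set that leaves both endpoints non-deterministic. Without A3 (which Proposition~\ref{coro1} does not assume) your ``saving grace'' does not close the gap: for two non-adjacent leaves $\tilde X_j^*,\tilde X_k^*$, any separating set containing $\mathrm{PA}(\tilde X_j^*)$ renders $\tilde X_j^*$ deterministic, yet a set omitting some parent of each may fail to d-separate them.

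The paper sidesteps this entirely by \emph{using} the leaf identification to decompose the problem rather than feeding it into DPC. Once the leaf set is known: (i) the non-leaf variables are causally sufficient with a nonsingular covariance, so ordinary PC on the non-leaf sub-covariance recovers that subgraph up to Markov equivalence; (ii) there are no leaf--leaf edges by definition; (iii) for each leaf $\tilde X_i^*$, its row in $\boldsymbol\Sigma_{\tilde{\mathbf X}^*}$ is a linear combination of the non-leaf rows, and because the non-leaf covariance is full rank this combination is \emph{unique}, so the parents of $\tilde X_i^*$ are exactly the non-leaves with nonzero coefficient. Step~(iii) is pure linear algebra and replaces the CI-test machinery you were worried about; this is the key simplification you are missing.
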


Proofs are given in Appendix. The proof of this corollary inspires a procedure to estimate the information of $\tilde{G}$ from contaminated observations in this case, which is denoted by $\texttt{FA+EquVar}$. 
 It consists of four steps. (1) Apply FA on the data with a given number of leaf nodes and estimate the variances of $E^*_i$ as well as the covariance matrix of $\tilde{\mathbf{X}}^*$.\footnote{Here we suppose the number of leaf nodes is given. In practice one may use model selection methods, such as BIC, to find this number.} (2) The $(n-l)$ smallest values of  the variances of $E^*_i$ correspond to non-leaf nodes, and the remaining $l$ nodes correspond to leaf nodes. (3) Apply a causal discovery method, such as the PC algorithm, to the sub-matrix of the estimated covariance matrix of $\tilde{\mathbf{X}}^*$ corresponding to non-leaf nodes and find the causal structure over non-leaf nodes. (4) For each leaf node $X_i^*$, find the subset of non-leaf nodes that determines $X_i^*$, and draw directed edges from those nodes to $X_i^*$, and further perform orientation propagation. 






\subsection{Gaussian CAMME: General Case} \label{Sec:G_CAMME_general} 
Now let us consider the general case where we do not have the constraint A2 on the measurement error. Generally speaking, after performing FA on the data, the task is to discover causal relations among $\tilde{X}_i^*$ by analyzing their estimated  covariance matrix, which is, unfortunately, singular, with the rank $(n - l)$. Then there must exist deterministic relations among $\tilde{{X}}^*_i$, and we have to deal with such relations in causal discovery.
Here suppose we simply apply the Deterministic PC (DPC) algorithm~\citep{Glymour07,Luo06} to tackle this problem. DPC is almost identical to PC, and the only difference is that when testing for conditional independence relationship $U\independent V\,|\,\mathbf{W}$, if  $U$ or $V$ is a deterministic function of $\mathbf{W}$, one then ignores this test (or equivalently we do not remove the edge between $U$ and $V$). 
We denote by $\texttt{FA+DPC}$ this procedure for causal discovery from data with measurement error. 

Under some conditions on the underlying causal model $\tilde{G}$, it can be estimated up to its equivalence class, as given in the following proposition. Here we use $\mathrm{PA}(\tilde{X}_j)$ to denote the set of parents (direct causes) of $\tilde{X}_j$ in $\tilde{G}$. 

\begin{proposition} \label{Prop:equivalence}
Suppose Assumptions A0 and A1 hold. As $N \rightarrow \infty$, compared to $\tilde{G}$, the graph produced by the above DPC procedure does not contain any missing edge. In particular, the edges between all non-leaf nodes are corrected identified. Furthermore, the whole graph of $\tilde{G}$ is identifiable up to its equivalence class if the following  assumption further holds:
\begin{itemize}
\item[A3.] For each pair of leaf nodes $\tilde{X}_j$ and $\tilde{X}_k$, there exists 
$\tilde{X}_p \in \mathrm{PA}(\tilde{X}_j)$
and 
$\tilde{X}_q \in \mathrm{PA}(\tilde{X}_k)$
that are d-separated in $\tilde{G}$ by a variable set $\mathbf{S}_{1}$, which may be the empty set. Moreover,  for each leaf node $\tilde{X}_j$ and each non-leaf node $\tilde{X}_i$ which are not adjacent, there exists 
$\tilde{X}_r \in \mathrm{PA}(\tilde{X}_j)$ which is d-separated from $\tilde{X}_i$ in $\tilde{G}$  by a variable set $\mathbf{S}_2$, which may be the empty set.
\end{itemize}

\end{proposition}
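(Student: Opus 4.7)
The plan is to establish the three claims in turn: that DPC applied to the identified covariance of $\tilde{\mathbf{X}}^*$ never drops a true edge of $\tilde{G}$, that it correctly recovers the skeleton (and hence the equivalence class) on the non-leaf nodes, and that under A3 it additionally removes every spurious edge incident to a leaf. Throughout I work with the graph $\tilde{G}^*$, whose adjacencies and orientations coincide with those of $\tilde{G}$ but whose leaf nodes are deterministic linear combinations of their parents.

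I would first check that no true edge can be cut. If $\tilde{X}_i$ and $\tilde{X}_j$ are adjacent in $\tilde{G}$, and hence in $\tilde{G}^*$, then no $\mathbf{S}$ d-separates them; by A0 any $\tilde{X}_i^* \ci \tilde{X}_j^* \mid \mathbf{S}$ must therefore be a \emph{deterministic} CI, and DPC by construction ignores every such test, so the edge survives. For the second claim I would exploit that a non-adjacent pair of non-leaf nodes $\tilde{X}_i, \tilde{X}_j$ admits a d-separating set $\mathbf{S}$ made exclusively of non-descendants of both (e.g.\ the parents of whichever one is a non-descendant of the other). Because $\tilde{X}_i^* = \tilde{E}_i + (\text{terms in the noises of proper ancestors of } i)$ and the unique-to-$i$ noise $\tilde{E}_i$ appears in $\tilde{X}_k^*$ only for descendants $k$ of $i$, no linear combination of $\{\tilde{X}_k^* : k \in \mathbf{S}\}$ can reconstruct $\tilde{E}_i$; hence $\tilde{X}_i^*$ is not deterministic in $\mathbf{S}$, and the same holds for $\tilde{X}_j^*$. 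The CI is then non-deterministic, DPC uses it, the edge is removed, and the skeleton on the non-leaves is exactly the true one; standard PC collider and propagation rules then identify its equivalence class.

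The remaining work uses A3 to deal with leaves. Since each leaf is a deterministic linear combination of its parents, any candidate conditioning set containing $\mathrm{PA}(\tilde{X}_j)$ is discarded by DPC. For a non-adjacent leaf pair $\tilde{X}_j, \tilde{X}_k$ I would propose $\mathbf{S} = \mathbf{S}_1 \cup (\mathrm{PA}(\tilde{X}_j) \setminus \{\tilde{X}_p\}) \cup (\mathrm{PA}(\tilde{X}_k) \setminus \{\tilde{X}_q\})$ with $\tilde{X}_p, \tilde{X}_q, \mathbf{S}_1$ supplied by A3. Any path from $\tilde{X}_j$ to $\tilde{X}_k$ must leave $\tilde{X}_j$ through one of its parents: if that parent is not $\tilde{X}_p$, the path is blocked at a conditioned non-collider in $\mathbf{S}$; if it is $\tilde{X}_p$, the remaining portion realises a $\tilde{X}_p$-to-$\tilde{X}_q$ path, which $\mathbf{S}_1 \subseteq \mathbf{S}$ blocks. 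Since $\tilde{X}_p, \tilde{X}_q \notin \mathbf{S}$, neither leaf is deterministic in $\mathbf{S}$, so the CI is non-deterministic and DPC cuts the edge. The second half of A3 supplies the analogous set for a non-adjacent leaf/non-leaf pair, completing the skeleton argument; PC orientation then finishes the proof.

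I expect the delicate point to be verifying that enlarging $\mathbf{S}_1$ by the remaining parents of $\tilde{X}_j$ and $\tilde{X}_k$ does not inadvertently open a collider on the $\tilde{X}_p$-to-$\tilde{X}_q$ sub-path and thereby destroy d-separation. I would address this by taking $\mathbf{S}_1$ to be a minimal d-separating set (so its members are ancestors of $\tilde{X}_p$ or $\tilde{X}_q$) and by observing that any collider $\tilde{X}_c$ on that sub-path can be an ancestor of a parent of $\tilde{X}_j$ or $\tilde{X}_k$ only along a directed route that would itself traverse $\tilde{X}_j$ or $\tilde{X}_k$, which is impossible since leaves have no outgoing edges. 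Showing that this combinatorial observation rules out all problematic collider openings is the genuinely non-routine part of the argument; everything else is bookkeeping about d-separation in $\tilde{G}$ together with the standard completeness of PC orientation.
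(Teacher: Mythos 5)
Your argument follows the paper's proof almost step for step: the same three-way split (no missing edges because detected non-deterministic CIs are sound under A0; exact recovery among non-leaf nodes because a separating set of non-leaf non-descendants never renders either endpoint deterministic; removal of spurious leaf edges using the conditioning set $\mathbf{S}_1 \cup (\mathrm{PA}(\tilde{X}_j)\setminus\{\tilde{X}_p\}) \cup (\mathrm{PA}(\tilde{X}_k)\setminus\{\tilde{X}_q\})$ built from A3's witnesses), followed by the standard orientation step. In several places you are more explicit than the paper, e.g.\ in arguing via the private noise term $\tilde{E}_i$ that a non-leaf node cannot be a deterministic function of its separating set. The one point where the paper simply cites Lemma 4 of Luo (2006) for collider orientation in the presence of deterministic relations, you wave at ``standard PC collider rules''; that is a small omission but not the main issue.

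The main issue is the ``delicate point'' you yourself flag: enlarging $\mathbf{S}_1$ by the remaining parents of $\tilde{X}_j$ and $\tilde{X}_k$ can reopen a collider on the $\tilde{X}_p$-to-$\tilde{X}_q$ subpath. You are right that the paper's proof silently asserts that paths through both $\tilde{X}_p$ and $\tilde{X}_q$ ``are blocked by $\mathbf{S}_1$'' and concludes they are blocked by the superset, which does not follow in general for d-separation. But your proposed repair is incorrect: you claim a collider $\tilde{X}_c$ on that subpath can only reach a parent of $\tilde{X}_j$ or $\tilde{X}_k$ by a directed route traversing $\tilde{X}_j$ or $\tilde{X}_k$. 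A directed path from $\tilde{X}_c$ to another parent $W$ of $\tilde{X}_j$ terminates at $W$ and never needs to visit the leaf at all. Concretely, take $\tilde{X}_p \rightarrow \tilde{X}_j \leftarrow W$, $\tilde{X}_q \rightarrow \tilde{X}_k$, $\tilde{X}_p \rightarrow \tilde{X}_c \leftarrow \tilde{X}_q$, and $\tilde{X}_c \rightarrow W$. Then $\tilde{X}_p$ and $\tilde{X}_q$ are d-separated by $\mathbf{S}_1 = \emptyset$, yet the prescribed set $\mathbf{S} = \{W\}$ activates the path $\tilde{X}_j \leftarrow \tilde{X}_p \rightarrow \tilde{X}_c \leftarrow \tilde{X}_q \rightarrow \tilde{X}_k$ because $W$ is a descendant of the collider $\tilde{X}_c$. (In this configuration a different choice of witness, $\tilde{X}_p' = W$ with $\mathbf{S}_1 = \{\tilde{X}_c\}$, does work, so the gap is in the proof rather than necessarily in the proposition; but A3 only guarantees existence of some witness pair, and the argument as written must work for an arbitrary one.) To close the gap you would need either to strengthen the choice of $\tilde{X}_p, \tilde{X}_q, \mathbf{S}_1$, or to argue separately that whenever the constructed set fails there is another non-deterministic separating set that DPC will find; neither is supplied by your minimality observation.
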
 

\paragraph{Example Set 2 and Discussion}
Suppose assumption A0 holds. 
\begin{itemize}
\item $\tilde{G}_A$, given in Figure~\ref{fig:illust_D}(a), follows assumptions A1 and A3.  According to Proposition \ref{Prop:equivalence}, the equivalence class of this causal DAG can be asymptotically estimated from observations with measurement error.  
\item 
Assumptions A0, A1, and A3 are sufficient conditions for $\tilde{G}$ to be recovered up to its equivalence class and, they, especially A3, may not be necessary.  For instance, consider the causal graph $\tilde{G}_B$ given in Figure~\ref{fig:illust_D}(b), for which assumption A3 does not hold. If assumption A2 holds, $\tilde{G}_B$ can be uniquely estimated from contaminated data. Other constraints may also guarantee the identifiability of the underlying graph. For example, suppose all coefficients in the causal model are smaller than one in absolute value, then $\tilde{G}_B$  can also be uniquely estimated from noisy data. Relaxation of assumption A3 which still guarantees that $\tilde{G}$ is identifiable up to its equivalence class is a future line of research. 
\item The causal graphs $\tilde{G}_C$ and $\tilde{G}_D$, shown in Figure~\ref{fig:illust_D}(c), do not follow A1, so generally speaking, they are not identifiable from contaminated observations with second-order statistics. This is also the case for $\tilde{G}_E$, shown in Figure~\ref{fig:illust_D}(d).
\end{itemize}
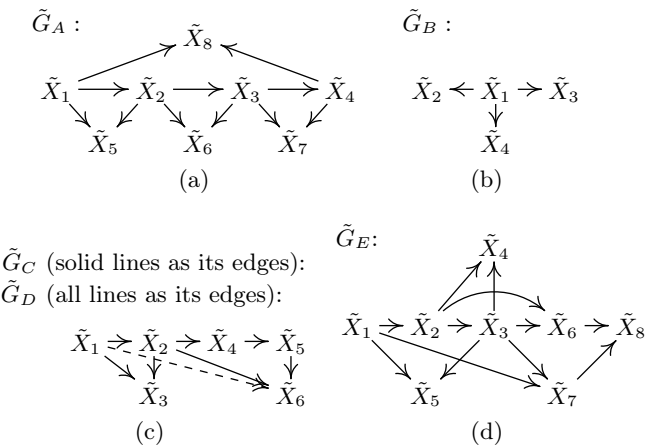
\begin{figure}[h]
\centering
\subfigure []
{\hspace{0cm}\begin{tikzpicture}[scale=.7, line width=0.5pt, inner sep=0.2mm, shorten >=.1pt, shorten <=.1pt]
\draw (0, 0) node(1)  {{\footnotesize\,$\tilde{X}_1$\,}};
  \draw (1.8, 0) node(2) {{\footnotesize\,$\tilde{X}_2$\,}};
\draw (3.6, 0) node(3)  {{\footnotesize\,$\tilde{X}_3$\,}};
\draw (5.4, 0) node(4) {{\footnotesize\,$\tilde{X}_4$\,}};
\draw (0.9, -1) node(5) {{\footnotesize\,$\tilde{X}_5$\,}};
  \draw (2.7, -1) node(6)  {{\footnotesize\,$\tilde{X}_6$\,}};
\draw (4.5, -1) node(7){{\footnotesize\,$\tilde{X}_7$\,}};
\draw (2.7, 1) node(8) {{\footnotesize\,$\tilde{X}_8$\,}};
\draw (0,1.3) node(9) {{\footnotesize\,$\tilde{G}_A:$\,}};
  \draw[-arcsq] (1) -- (2); 
  \draw[-arcsq] (2) -- (3); 
  \draw[-arcsq] (3) -- (4); 
  \draw[-arcsq] (1) -- (5); 
  \draw[-arcsq] (2) -- (5); 
    \draw[-arcsq] (2) -- (6); 
  \draw[-arcsq] (3) -- (6); 
  \draw[-arcsq] (3) -- (7); 
  \draw[-arcsq] (4) -- (7); 
  \draw[-arcsq] (1) -- (8); 
    \draw[-arcsq] (4) -- (8); 
\end{tikzpicture} }~~~
\subfigure []
{\hspace{0cm}\begin{tikzpicture}[scale=.7, line width=0.5pt, inner sep=0.2mm, shorten >=.1pt, shorten <=.1pt]
\draw (0, 0) node(1)  {{\footnotesize\,$\tilde{X}_1$\,}};
  \draw (-1.3, 0) node(2) {{\footnotesize\,$\tilde{X}_2$\,}};
\draw (1.3, 0) node(3)  {{\footnotesize\,$\tilde{X}_3$\,}};
\draw (0, -1) node(4) {{\footnotesize\,$\tilde{X}_4$\,}};
\draw (-1.3,1.3) node(5) {{\footnotesize\,$\tilde{G}_B:$\,}};
  \draw[-arcsq] (1) -- (2); 
  \draw[-arcsq] (1) -- (3); 
  \draw[-arcsq] (1) -- (4); 
\end{tikzpicture}}
\\ 
\subfigure []
{\hspace{0cm}\begin{tikzpicture}[scale=.7, line width=0.5pt, inner sep=0.2mm, shorten >=.1pt, shorten <=.1pt]
\draw (0, 0) node(2)  {{\footnotesize\,$\tilde{X}_4$\,}};
  \draw (-1.3, 0) node(1) {{\footnotesize\,$\tilde{X}_2$\,}};
\draw (1.3, 0) node(3)  {{\footnotesize\,$\tilde{X}_5$\,}};
\draw (1.3, -1) node(4) {{\footnotesize\,$\tilde{X}_6$\,}};
\draw (-2.6, 0) node(5) {{\footnotesize\,$\tilde{X}_1$\,}};
\draw (-1.3, -1) node(6) {{\footnotesize\,$\tilde{X}_3$\,}};
\draw (-1.3,1.5) node(7) {{\footnotesize\,$\tilde{G}_C$ (solid lines as its edges):\,}};
\draw (-1.3,.9) node(8) {{\footnotesize\,$\tilde{G}_D$  (all lines as its edges):~~~\,}};
  \draw[-arcsq] (1) -- (2); 
  \draw[-arcsq] (2) -- (3); 
  \draw[-arcsq] (3) -- (4); 
    \draw[-arcsq] (1) -- (4); 
        \draw[-arcsq] (5) -- (1); 
            \draw[-arcsq] (5) -- (6); 
                        \draw[-arcsq] (1) -- (6); 
                        \draw[-arcsq,dashed] (5) -- (4); 
\end{tikzpicture}}~~
\subfigure []
{\hspace{0cm}\begin{tikzpicture}[scale=.7, line width=0.5pt, inner sep=0.2mm, shorten >=.1pt, shorten <=.1pt]
\draw (0, 0) node(2)  {{\footnotesize\,$\tilde{X}_3$\,}};
  \draw (-1.3, 0) node(1) {{\footnotesize\,$\tilde{X}_2$\,}};
\draw (1.3, 0) node(3)  {{\footnotesize\,$\tilde{X}_6$\,}};
\draw (1.3, -1.3) node(4) {{\footnotesize\,$\tilde{X}_7$\,}};
\draw (2.6, 0) node(8)  {{\footnotesize\,$\tilde{X}_8$\,}};
\draw (-2.6, 0) node(5) {{\footnotesize\,$\tilde{X}_1$\,}};
\draw (-1.3, -1.3) node(6) {{\footnotesize\,$\tilde{X}_5$\,}};
\draw (0, 1.5) node(7) {{\footnotesize\,$\tilde{X}_4$\,}};
\draw (-2.6,1.7) node(9) {{\footnotesize\,$\tilde{G}_E$:\,}};
        \draw[-arcsq] (5) -- (1); 
  \draw[-arcsq] (1) -- (2); 
  \draw[-arcsq] (2) -- (3); 
  \draw[-arcsq] (3) -- (8); 
  \draw[-arcsq] (1) -- (7); 
    \draw[-arcsq] (2) -- (7); 
        \draw[-arcsq] (2) -- (4); 
  \draw[-arcsq] (5) -- (6); 
    \draw[-arcsq] (2) -- (6); 
  \draw[-arcsq] (4) -- (8); 
    \draw[-arcsq] (5) -- (4); 
\draw [-arcsq] (1) to[bend left=40]  (3);
\end{tikzpicture}}
\caption{(a)  $\tilde{G}_A$: a causal DAG $\tilde{G}$ which follows assumptions A1 and A3. (b) $\tilde{G}_B$: a DAG which follows assumption A1, but not A3; however, the structure is still identifiable if either assumption A2 holds or we know that all causal coefficients are smaller than one in absolute value. (c) Two DAGs that do not follow assumption A1; $\tilde{G}_C$ has only the solid lines as its edges, and $\tilde{G}_D$ also includes the dashed line. (d) $\tilde{G}_E$: another DAG that does not follow assumption A1.}
 \label{fig:illust_D}
\end{figure}

\section{Identifiability with Higher Order Statistics}\label{Subsec:higher_order}

The method based on second-order statistics exploits FA and deterministic causal discovery, both of which are computationally relatively efficient. However, if the number of leaf-nodes is so small that the condition in Proposition \ref{iden_2nd} is violated (roughly speaking, usually this does not happen when $n$ is big, say, bigger than 50, but is likely to be the case when $n$ is very small, say, smaller than 10), the underlying causal model is not guaranteed to be identifiable from contaminated observations. Another issue is that with second-order statistics, the causal model for $\tilde{\mathbf{X}}$ is usually not uniquely identifiable; in the best case it can be recovered up to its equivalence class (and leaf nodes). To tackle these issues, below we show that we can benefit from higher-order statistics of the noise terms.

In this section we further make the following assumption on the distribution of $\tilde{E}_i$:
\begin{itemize}
\item[A4.] All $\tilde{E}_i$ are non-Gaussian.
\end{itemize}

We note that under the above assumption, $\mathbf{A}^{\mathrm{\mathrm{NL}}}$ in (\ref{Eq:ICA}) can be estimated up to the permutation and scaling indeterminacies (including the sign indeterminacy) of the columns, as given in the following lemma.
\begin{lemma} \label{Lemma1_column_L}
Suppose assumption A4 holds. Given ${\mathbf{X}}$ which is generated according to (\ref{Eq:ICA}), $\mathbf{A}^{\mathrm{\mathrm{NL}}}$ is identifiable up to permutation and scaling of columns as the sample size $N \rightarrow \infty$.
\end{lemma}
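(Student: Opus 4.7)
The plan is to view equation (\ref{Eq:ICA}) as an over-complete ICA model with observation $\mathbf{X}$, full mixing matrix $\mathbf{M} \triangleq [\mathbf{A}^{\mathrm{NL}} \mid \mathbf{I}]$ of size $n \times (2n - l)$, and source vector $\big((\tilde{\mathbf{E}}^{\mathrm{NL}})^\intercal, (\mathbf{E}^*)^\intercal\big)^\intercal$ whose entries are mutually independent by construction, and then to invoke a known identifiability theorem for over-complete ICA to recover the columns of $\mathbf{A}^{\mathrm{NL}}$ up to permutation and scaling.

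First, I would verify the non-Gaussianity hypothesis: under assumption A4 every component of $\tilde{\mathbf{E}}^{\mathrm{NL}}$ is non-Gaussian, and these are exactly the sources whose mixing columns we wish to identify. Second, I would verify the column non-proportionality requirement. Each column of $\mathbf{A}^{\mathrm{NL}}$ has at least two non-zero entries by construction (a non-leaf noise term contributes to its own node and to at least one descendant), whereas each column of $\mathbf{I}$ has exactly one non-zero entry, so no column of $\mathbf{A}^{\mathrm{NL}}$ can be proportional to any column of $\mathbf{I}$; pairwise non-proportionality within $\mathbf{A}^{\mathrm{NL}}$ follows generically because distinct non-leaf noises propagate through distinct rows of $(\mathbf{I}-\mathbf{B})^{-1}$. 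Third, I would invoke the identifiability characterization for over-complete ICA due to Eriksson and Koivunen (2004), itself a consequence of the Darmois--Skitovich / Kagan--Linnik--Rao theorem, to conclude that the columns of $\mathbf{M}$ associated with the non-Gaussian sources are determined by the distribution of $\mathbf{X}$ up to column permutation and rescaling. Because the $\mathbf{I}$ block is already known and has columns with only a single non-zero entry, the structural distinction between identified columns then isolates $\mathbf{A}^{\mathrm{NL}}$ itself.

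The main obstacle is that A4 constrains only the process noises $\tilde{E}_i$, not the measurement errors $E_i$. For a non-leaf node, $E_i^* = E_i$ can be Gaussian, which prevents a direct application of the ``all sources non-Gaussian'' version of the theorem. To handle this I would use the partial-identifiability form of the Kagan--Linnik--Rao decomposition: factor the characteristic function of $\mathbf{X}$ as the product of the characteristic function of $\mathbf{A}^{\mathrm{NL}} \tilde{\mathbf{E}}^{\mathrm{NL}}$ and that of $\mathbf{E}^*$, and argue that any Gaussian portion of $\mathbf{E}^*$ contributes only a Gaussian factor to $\mathbf{X}$ whose covariance is absorbed on the ``$\mathbf{I}$ side'' of the mixing matrix, while the non-Gaussian directional information carried by $\tilde{\mathbf{E}}^{\mathrm{NL}}$ can only be explained by the columns of $\mathbf{A}^{\mathrm{NL}}$. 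This gives the permutation-and-scaling identifiability of $\mathbf{A}^{\mathrm{NL}}$ claimed in the lemma, with the sample size $N \to \infty$ ensuring that the distribution of $\mathbf{X}$ is itself recoverable from data.
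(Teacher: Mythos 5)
Your proposal is correct and takes essentially the same route as the paper, whose proof is simply the citation of Theorem 10.3.1 in Kagan--Linnik--Rao and Theorem 1 of Eriksson--Koivunen for over-complete ICA; your elaboration of the hypotheses (non-Gaussianity of $\tilde{\mathbf{E}}^{\mathrm{NL}}$, disambiguating the $\mathbf{A}^{\mathrm{NL}}$ columns from the $\mathbf{I}$ block via their number of non-zero entries, and the partial-identifiability handling of possibly Gaussian components of $\mathbf{E}^*$) fills in exactly the checks the paper leaves implicit. One small strengthening: pairwise non-proportionality of the columns of $\mathbf{A}^{\mathrm{NL}}$ is not merely generic but automatic, since they are a subset of the columns of the invertible matrix $(\mathbf{I}-\mathbf{B})^{-1}$ and hence linearly independent.
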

\begin{proof}
This lemma is implied by Theorem 10.3.1 in \citep{Kagan73} or Theorem 1 in~\citep{Eriksson04}.
\end{proof}



\subsection{Non-Gaussian CAMME with the Same Variance For Measurement Errors}
We first note that under certain assumptions the underlying graph $\tilde{G}$ is fully identifiable, as shown in the following proposition.
\begin{proposition}\label{prop:recover_ica}
Suppose the assumptions in Corollary~\ref{coro1} hold, and further suppose assumption A4 holds. Then as $N \rightarrow \infty$, the underlying causal graph $\tilde{G}$ is fully identifiable from observed values of $X_i$.
\end{proposition}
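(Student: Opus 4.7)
The plan is to combine the leaf/non-leaf identification obtained from the second-order (Gaussian) analysis with the higher-order identification of the canonical mixing matrix, and then reduce the non-leaf subsystem to a standard LiNGAM problem, whose coefficient matrix is known to be fully identifiable.

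First I would invoke Corollary~\ref{coro1}: under A0, A1 and A2, the partition of $\{\tilde{X}_1,\ldots,\tilde{X}_n\}$ into leaf and non-leaf nodes is asymptotically identifiable from the second-order statistics. In particular, the index set of leaf variables is known, so the rows of the $n\times(n-l)$ matrix $\mathbf{A}^{\mathrm{NL}}$ are labelled as either ``non-leaf'' or ``leaf''. Next, I would apply Lemma~\ref{Lemma1_column_L} under A4 to recover $\mathbf{A}^{\mathrm{NL}}$ up to permutation and scaling of its columns.

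Since leaf nodes have no children, the corresponding columns of $\mathbf{B}$ vanish, so after permuting rows and columns into the order (non-leaf, leaf) one has
\begin{equation*}
\mathbf{A}^{\mathrm{NL}} \;=\; \begin{bmatrix} (\mathbf{I}-\mathbf{B}^{\mathrm{NL},\mathrm{NL}})^{-1} \\[2pt] \mathbf{B}^{\mathrm{L},\mathrm{NL}}(\mathbf{I}-\mathbf{B}^{\mathrm{NL},\mathrm{NL}})^{-1} \end{bmatrix}.
\end{equation*}
Restricting the recovered $\mathbf{A}^{\mathrm{NL}}$ to the non-leaf rows (known from step~1) therefore yields $(\mathbf{I}-\mathbf{B}^{\mathrm{NL},\mathrm{NL}})^{-1}$ up to column permutation and scaling. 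This is exactly the ICA output of a square LiNGAM model on the causally sufficient non-leaf subsystem driven by the independent non-Gaussian noises $\tilde{E}_i^{\mathrm{NL}}$. I would then invoke the LiNGAM uniqueness argument of~\citep{Shimizu06}: the scaling is pinned down by the fact that $(\mathbf{I}-\mathbf{B}^{\mathrm{NL},\mathrm{NL}})^{-1}$ has unit diagonal, and the column permutation is pinned down by the requirement that the resulting $\mathbf{B}^{\mathrm{NL},\mathrm{NL}}$ be permutable to a strictly lower-triangular matrix (acyclicity of $\tilde{G}$ restricted to non-leaf nodes). This identifies $\mathbf{B}^{\mathrm{NL},\mathrm{NL}}$ uniquely. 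Substituting back, the bottom block of the recovered $\mathbf{A}^{\mathrm{NL}}$ gives $\mathbf{B}^{\mathrm{L},\mathrm{NL}} = (\text{bottom block})\,(\mathbf{I}-\mathbf{B}^{\mathrm{NL},\mathrm{NL}})$, and the leaf columns of $\mathbf{B}$ are identically zero; hence the full adjacency matrix $\mathbf{B}$, and therefore $\tilde{G}$, is recovered.

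The main obstacle is justifying the coherent resolution of the permutation/scaling ambiguity from Lemma~\ref{Lemma1_column_L}: the rows of $\mathbf{A}^{\mathrm{NL}}$ are indexed by observed variables (labels are known) whereas the columns are indexed by the latent non-leaf noise terms (labels must be recovered). What makes the reduction go through is that the same permutation that produces a unit-diagonal, lower-triangular $(\mathbf{I}-\mathbf{B}^{\mathrm{NL},\mathrm{NL}})^{-1}$ on the non-leaf rows is automatically the right permutation for the whole matrix; uniqueness of this permutation is precisely LiNGAM's identifiability statement applied to the non-leaf subsystem, and the A2-based knowledge of which rows are non-leaf is what lets us isolate that sub-block in the first place.
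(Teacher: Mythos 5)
Your proposal is correct and follows essentially the same route as the paper's proof: identify the leaf/non-leaf partition via Proposition~\ref{coro1}, recover $\mathbf{A}^{\mathrm{NL}}$ up to column permutation and scaling via Lemma~\ref{Lemma1_column_L}, express each leaf row as the unique linear combination of the (linearly independent) non-leaf rows to get the edges into leaf nodes, and apply LiNGAM identifiability to the square non-leaf sub-block to recover $\mathbf{B}^{\mathrm{NL},\mathrm{NL}}$. Your explicit block formula for $\mathbf{A}^{\mathrm{NL}}$ and your discussion of how the permutation/scaling indeterminacy is resolved are just a more detailed rendering of the same argument.
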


\subsection{Non-Gaussian CAMME: More General Cases} \label{Sec:NG_general}
In the general case, what information of the causal structure $\tilde{G}$ can we recover? Can we apply existing methods for causal discovery based on LiNGAM, such as ICA-LiNGAM~\citep{Shimizu06} and Direct-LiNGAM~\citep{DirectLingam}, to recover it? LiNGAM assumes that the system is non-deterministic: each variable is generated as a linear combination of its direct causes plus a non-degenerate noise term. As a consequence, the linear transformation from the vector of observed variables to the vector of independent noise terms is a square matrix; ICA-LiNGAM applies certain operations to this matrix to find the causal model, and Direct-LiNGAM estimates the causal ordering by enforcing the property that the residual of regressing the effect on the root cause is always independent from the root cause.


In our case, $\mathbf{A}^{\mathrm{NL}}$, the essential part of the mixing matrix in (\ref{Eq:ICA}), is $n\times r$, where $r<n$. In other words, for some of the variables $\tilde{X}_i^*$, the causal relations are deterministic. (In fact, if $\tilde{X}_k$ is a leaf node in $\tilde{G}$, $\tilde{X}_k^*$ is a deterministic function of $\tilde{X}_k$'s direct causes.) As a consequence, unfortunately, the above causal analysis methods based on LiNGAM, including ICA-LiNGAM and Direct-LiNGAM, do not apply.  We will see how to recover information of $\tilde{G}$ by analyzing the estimated $\mathbf{A}^{\mathrm{NL}}$.


 We will show that some group structure and the group-wise causal ordering in $\tilde{G}$ can always be recovered. Before presenting the results, let us define the following recursive group decomposition according to causal structure $\tilde{G}$.  
\begin{Defi} [{\bf Recursive group decomposition}]
Consider the causal model $\tilde{G}^*$. Put all leaf nodes which share the same direct-and-only-direct node in the same group; further incorporate the corresponding direct-and-only-direct node in the same group. Here we say a node $\tilde{X}_i^*$ is the ``direct-and-only-direct" node of $\tilde{X}_j^*$ if and only if $\tilde{X}_i^*$ is a direct cause of $\tilde{X}_j^*$ and there is no other directed path from $\tilde{X}_i^*$ to $\tilde{X}_j^*$.  For those nodes which are not a direct-and-only-direct node of any leaf node, each of them forms a separate group. We call the set of all such groups ordered according to the causal ordering of the non-leaf nodes in DAG $\tilde{G}^*$ a recursive group decomposition of $\tilde{G}^*$, denoted by $\mathcal{G}_{\tilde{G}^*}$.
\end{Defi}

\paragraph{Example Set 3}
As seen from the process of recursive group decomposition, each non-leaf node is in one and only one recursive group, and it is possible for multiple leaf nodes to be in the same group. Therefore, in total there are $(n-l)$ recursive groups.  For example, for $\tilde{G}_A$ given in Figure~\ref{fig:illust_D}(a), a corresponding group structure for the corresponding $\tilde{G}^*$ is $\mathcal{G}_{\tilde{G}_A^*} = ( \{\tilde{X}_1^*\} \rightarrow \{\tilde{X}_2^*, \tilde{X}_5^*\}  \rightarrow \{\tilde{X}_3^*, \tilde{X}_6v\}  \rightarrow \{\tilde{X}_4^*, \tilde{X}_7^*, \tilde{X}_8^*\}       )$, and for $\tilde{G}_B$ in Figure~\ref{fig:illust_D}(b), there is only one group: $\mathcal{G}_{\tilde{G}_B^*} = ( \{\tilde{X}_1^*, \tilde{X}_2^*, \tilde{X}_3^*, \tilde{X}_4^*\})$. For both $\tilde{G}_C$ and $\tilde{G}_D$, given in Figure~\ref{fig:illust_D}(c), a recursive group decomposition is $ ( \{\tilde{X}_1^*\}  \rightarrow \{\tilde{X}_2^*, \tilde{X}_3^*\}  \rightarrow \{\tilde{X}_4^*\}  \rightarrow   \{\tilde{X}_5^*, \tilde{X}_6^*\} )$. 

Note that the causal ordering and the recursive group decomposition of given variables according to the graphical model $\tilde{G}^*$  may not be unique. For instance, if $\tilde{G}^*$ has only two variables $\tilde{X}_1^*$ and $\tilde{X}_2^*$ which are not adjacent, both decompositions $(\tilde{X}_1^*  \rightarrow \tilde{X}_2^*)$ and $(\tilde{X}_2^*  \rightarrow \tilde{X}_1^*)$ are correct.  Consider $\tilde{G}^*$ over three variables, $\tilde{X}_1^*, \tilde{X}_2^*, \tilde{X}_3^*$, where $\tilde{X}_1^*$ and $\tilde{X}_2^*$ are not adjacent and are both causes of $\tilde{X}_3^*$; then both $(\tilde{X}_1^*  \rightarrow \{ \tilde{X}_2^*, \tilde{X}_3^* \})$ and $(\tilde{X}_2^*  \rightarrow \{ \tilde{X}_1^*, \tilde{X}_3^* \})$ are valid recursive group decompositions.

We first present a procedure to construct the recursive group decomposition and the causal ordering among the groups  from the estimated $\mathbf{A}^{\mathrm{NL}}$.   We will further show that  the recovered recursive group decomposition is always asymptotically correct under assumption A4.

\subsubsection{Construction and Identifiability of Recursive Group Decomposition} \label{Eq:procedure}

First of all, Lemma~\ref{prop:recover_ica} tells us that $\hat{\mathbf{A}}^{\mathrm{NL}}$ in (\ref{Eq:ICA}) is identifiable up to permutation and scaling columns. Let us start with the asymptotic case, where the columns of the estimated $\mathbf{A}^{\mathrm{NL}}$  from values of ${X}_i$ are a permuted and rescaled version of the columns of  $\mathbf{A}^{\mathrm{NL}}$.   In what follows the permutation and rescaling of the columns of $\mathbf{A}^{\mathrm{NL}}$ does not change the result, so below we just work with the true $\mathbf{A}^{\mathrm{NL}}$, instead of its estimate.

$\tilde{X}_i^*$ and $\tilde{X}_i$ follow the same causal DAG, $\tilde{G}$, and $\tilde{X}_i^*$ are causally sufficient, although some variables among them (corresponding to leaf nodes in $\tilde{G}^*$) are determined by their direct causes.  Let us find the causal ordering of $\tilde{X}_i^*$.  If there are no deterministic relations and the values of $\tilde{X}_i^*$ are given, the causal ordering can be estimated by recursively performing regression and checking independence between the regression residual and the predictor~\citep{DirectLingam}. Specifically, if one regresses all the remaining variables on the root cause, the residuals are always independent from the predictor (the root cause). After detecting a root cause, the residuals of regressing all the other variables on the discovered root cause are still causally sufficient and follow a DAG. One can repeat the above procedure to find a new root cause over such regression residuals, until no variable is left.

However, in our case we have access to $\mathbf{A}^{\mathrm{NL}}$ but not the  values of $\tilde{X}^*_i$. Fortunately, the independence between regression residuals and the predictor can still be checked by analyzing $\mathbf{A}^{\mathrm{NL}}$. Recall that $\tilde{\mathbf{X}}^* = \mathbf{A}^{\mathrm{NL}} \tilde{\mathbf{E}}^{\mathrm{NL}}$, where the components of $\tilde{\mathbf{E}}^{\mathrm{NL}}$ are independent. Without loss of generality, here we assume that all components of $\tilde{\mathbf{E}}^{\mathrm{NL}}$ are standardized, i.e., they have a zero mean and unit variance. 
Denote by  $\mathbf{A}^{\mathrm{NL}}_{i\cdot}$ the $i$th row of $\mathbf{A}^{\mathrm{NL}}$. We have $\mathbb{E}[\tilde{X}^*_j\tilde{X}^*_i] = \mathbf{A}^{\mathrm{NL}}_{j\cdot} \mathbf{A}^{NL\intercal}_{i\cdot}$ and $\mathbb{E}[{\tilde{X}^{*2}_i}] = \mathbf{A}^{\mathrm{NL}}_{i\cdot} \mathbf{A}^{NL\intercal}_{i\cdot} = ||\mathbf{A}^{\mathrm{NL}}_{i\cdot}||^2$. The regression model for $\tilde{X}^*_j$ on $\tilde{X}^*_i$ is 
\begin{flalign} \nonumber
\tilde{X}^*_j = \frac{\mathbb{E}[\tilde{X}^*_j\tilde{X}^*_i]}{\mathbb{E}[\tilde{X}^{*2}_i]} \tilde{X}^*_i + R_{j\leftarrow i}  = \frac{\mathbf{A}^{\mathrm{NL}}_{j\cdot} \mathbf{A}^{NL\intercal}_{i\cdot}}{||\mathbf{A}^{\mathrm{NL}}_{i\cdot}||^2} \tilde{X}^*_i + R_{j\leftarrow i}.
\end{flalign}
Here the residual can be written as 
\begin{flalign} \nonumber
R_{j\leftarrow i} &=  \tilde{X}^*_j - \frac{\mathbf{A}^{\mathrm{NL}}_{j\cdot} \mathbf{A}^{NL\intercal}_{i\cdot}}{||\mathbf{A}^{\mathrm{NL}}_{i\cdot}||^2} \tilde{X}^*_i  \\  \label{Eq:res}
&= \underbrace{\big( \mathbf{A}^{\mathrm{NL}}_{j\cdot} - \frac{\mathbf{A}^{\mathrm{NL}}_{j\cdot} \mathbf{A}^{NL\intercal}_{i\cdot} \mathbf{A}^{\mathrm{NL}}_{i\cdot}}{||\mathbf{A}^{\mathrm{NL}}_{i\cdot}||^2}  \big)}_{
\triangleq \alpha_{j\leftarrow i}} \tilde{\mathbf{E}}^{\mathrm{NL}}.
\end{flalign}

If  for all $j$, $R_{j\leftarrow i}$ is either zero or independent from $\tilde{X}^*_i$, we consider $\tilde{X}^*_i$ as the current root cause and put it and all the other variables which are deterministically related to it in the first group, which is a root cause group. Now the problem is whether we can check for independence between nonzero residuals $R_{j\leftarrow i}$ and the predictor $\tilde{X}^*_i$. Interestingly, the answer is yes, as stated in the following proposition.
\begin{proposition} \label{Prop_ind_A}
Suppose assumption A4 holds. For variables $\tilde{\mathbf{X}}^*$ generated by (\ref{Eq:FA}), regression residual $R_{j\leftarrow i}$ given in (\ref{Eq:res}) is independent from variable $\tilde{{X}}_i^*$ if and only if
\begin{equation} \label{Eq:check_ind_asym}
| \alpha_{j\leftarrow i} \circ  \mathbf{A}^{\mathrm{NL}}_{i\cdot}|_1 = 0,
\end{equation}
where $\circ$ denotes entrywise product.
\end{proposition}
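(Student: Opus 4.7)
The plan is to apply the Darmois--Skitovich theorem to two linear forms in the same independent source vector. First I would note that, by the CR-CAMME, $\tilde{X}^*_i = \mathbf{A}^{\mathrm{NL}}_{i\cdot}\tilde{\mathbf{E}}^{\mathrm{NL}}$, and by the explicit formula (\ref{Eq:res}) the regression residual satisfies $R_{j\leftarrow i} = \alpha_{j\leftarrow i}\tilde{\mathbf{E}}^{\mathrm{NL}}$. Because $\tilde{\mathbf{E}}^{\mathrm{NL}}$ is a subvector of $\tilde{\mathbf{E}}$, assumption A4 guarantees that every component of $\tilde{\mathbf{E}}^{\mathrm{NL}}$ is non-Gaussian, and by the modeling assumption the components are mutually independent. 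Thus we are exactly in the hypothesis of the Darmois--Skitovich theorem applied to the two linear combinations $\tilde{X}^*_i$ and $R_{j\leftarrow i}$ built from the same independent non-Gaussian family.

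Second, I would invoke Darmois--Skitovich in the following form: if $L_1 = \sum_k a_k \xi_k$ and $L_2 = \sum_k b_k \xi_k$ with $\xi_k$ mutually independent, then $L_1 \independent L_2$ forces every $\xi_k$ with $a_k b_k \neq 0$ to be Gaussian; in particular, when all $\xi_k$ are non-Gaussian, independence holds iff $a_k b_k = 0$ for every $k$. Setting $a_k = (\mathbf{A}^{\mathrm{NL}}_{i\cdot})_k$ and $b_k = (\alpha_{j\leftarrow i})_k$, this pointwise-zero-product condition is precisely that the entrywise product $\alpha_{j\leftarrow i}\circ\mathbf{A}^{\mathrm{NL}}_{i\cdot}$ is the zero row vector, which for a real vector is equivalent to $|\alpha_{j\leftarrow i}\circ\mathbf{A}^{\mathrm{NL}}_{i\cdot}|_1 = 0$, since the $\ell_1$ norm of a real vector vanishes iff the vector itself vanishes.

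The ``if'' direction does not actually require Darmois--Skitovich: if the supports of $\mathbf{A}^{\mathrm{NL}}_{i\cdot}$ and $\alpha_{j\leftarrow i}$ are disjoint, then $\tilde{X}^*_i$ and $R_{j\leftarrow i}$ are measurable functions of disjoint subsets of a mutually independent family, hence independent (including the degenerate case $\alpha_{j\leftarrow i} = 0$, where $R_{j\leftarrow i} \equiv 0$). The substantive content is the converse, which is where Darmois--Skitovich bites. The main obstacle is simply confirming that the theorem's hypotheses are met in the present setting, namely mutual independence and universal non-Gaussianity of the coordinates of $\tilde{\mathbf{E}}^{\mathrm{NL}}$; both are immediate from the CR-CAMME construction and A4, and no further constraint on $\mathbf{A}^{\mathrm{NL}}$ (such as invertibility or column-wise non-degeneracy) is needed.
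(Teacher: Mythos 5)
Your proposal is correct and follows essentially the same route as the paper: both arguments view $\tilde{X}^*_i$ and $R_{j\leftarrow i}$ as linear forms in the independent non-Gaussian components of $\tilde{\mathbf{E}}^{\mathrm{NL}}$, invoke the Darmois--Skitovich theorem, and translate the disjoint-support condition into the vanishing of the $\ell_1$ norm of the entrywise product. Your additional remarks---that the ``if'' direction is elementary and that the degenerate case $\alpha_{j\leftarrow i}=0$ is covered---are sound refinements of the same argument.
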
 

So we can check for independence as if the values of $\tilde{\mathbf{X}}^*$ were given. Consequently, we can find the root cause group. 

We then consider the  residuals of regressing all the remaining variables $\tilde{X}^*_k$  on the discovered root cause as a new set of variables. Note that like the variables $\tilde{X}_j^*$, these variables are also linear mixtures  of $\tilde{{E}}_i$.  Repeating the above procedure on this new set of variance will give the second root cause and its recursive group. Applying this procedure repeatedly until no variable is left finally discovers all recursive groups following the causal ordering.
The constructed recurse group decomposition is asymptotically correct, as stated in the following proposition.

\begin{proposition} \label{non-gaussian:general}  {\bf (Identifiable recursive group decomposition)}
Let ${X}_i$ be generated by the CAMME with the corresponding measurement-error-free variables generated by the causal DAG $\tilde{G}$ and suppose assumptions \textcolor{black}{A0} and A4 hold. The recursive group decomposition constructed by the above procedure is asymptotically correct, in the sense that as the sample size $N\rightarrow \infty$, 
if non-leaf node $\tilde{X}_i$ is a cause of non-leaf node $\tilde{X}_j$, then the recursive group which $\tilde{X}_i$ is in precedes the group which $\tilde{X}_j$ belongs to. However, the causal ordering among the nodes within the same recursive group may not be identifiable.
\end{proposition}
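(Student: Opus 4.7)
The plan is to establish correctness by induction on the iterations of the procedure from Section~\ref{Eq:procedure}, relying on Lemma~\ref{Lemma1_column_L} (which, under assumption A4, identifies $\mathbf{A}^{\mathrm{NL}}$ up to column permutation and scaling) together with Proposition~\ref{Prop_ind_A}, which reduces the independence test between residual and predictor to a combinatorial condition on the joint zero pattern of rows of $\mathbf{A}^{\mathrm{NL}}$. A first observation is that both indeterminacies in the recovery of $\mathbf{A}^{\mathrm{NL}}$ leave the condition $|\alpha_{j\leftarrow i} \circ \mathbf{A}^{\mathrm{NL}}_{i\cdot}|_1 = 0$ invariant, so the decisions the procedure makes are well-defined asymptotically, and we may argue as if we had access to the true $\mathbf{A}^{\mathrm{NL}}$.

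For the base case, I would show that a candidate $\tilde{X}_i^*$ passes the root test in the first iteration if and only if $\tilde{X}_i$ is either a non-leaf root of $\tilde{G}$ or a leaf whose row in $\mathbf{A}^{\mathrm{NL}}$ is a scalar multiple of such a root's row. If $\tilde{X}_i$ is a non-leaf root, then $\tilde{X}_i^*=\tilde{E}_i$ and $\mathbf{A}^{\mathrm{NL}}_{i\cdot}$ has a single non-zero entry in column $i$; a direct substitution into (\ref{Eq:res}) shows that the $i$-th coordinate of $\alpha_{j\leftarrow i}$ vanishes for every $j$, so the condition of Proposition~\ref{Prop_ind_A} is automatically met and $R_{j\leftarrow i}\independent\tilde{X}_i^*$ holds for all $j$. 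Conversely, if $\tilde{X}_i$ is a non-leaf node with some non-leaf ancestor $\tilde{X}_k$, then both $\mathbf{A}^{\mathrm{NL}}_{i\cdot}$ and $\mathbf{A}^{\mathrm{NL}}_{k\cdot}$ are non-zero in the column of $\tilde{E}_k$, and the $k$-th coordinate of $\alpha_{k\leftarrow i}$ is generically non-zero under assumption A0; by Proposition~\ref{Prop_ind_A}, $R_{k\leftarrow i}$ then depends on $\tilde{X}_i^*$. Leaves whose rows are scalar multiples of the selected root's row yield identically-zero residuals and are adjoined to that root's group.

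For the inductive step, after extracting a root-cause group and regressing all remaining variables on its representative, the residuals are again linear mixtures of independent non-Gaussian sources, with the root's noise component eliminated. The residuals therefore satisfy an ICA-type model over $\tilde{\mathbf{E}}^{\mathrm{NL}}$ minus the already-processed sources, with mixing matrix equal to $\mathbf{A}^{\mathrm{NL}}$ restricted to the surviving columns, and the induced dependency structure corresponds to $\tilde{G}$ with the extracted group contracted away. Applying the base-case argument to this reduced system identifies the next root group. Composing iterations yields the ordering claim: if a non-leaf $\tilde{X}_i$ causally precedes a non-leaf $\tilde{X}_j$ in $\tilde{G}$, then $\tilde{X}_j$'s row carries a non-zero weight on $\tilde{E}_i$ until $\tilde{X}_i$ is processed, so $\tilde{X}_j$ cannot pass the root test before $\tilde{X}_i$ is selected.

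Finally, for the non-identifiability within a group, the leaf members adjoined to a non-leaf node $\tilde{X}_i$ at a given iteration have residualized rows that are exactly scalar multiples of $\tilde{X}_i$'s row, rendering the Darmois--Skitovich-based test symmetric between them; no diagnostic on $\mathbf{A}^{\mathrm{NL}}$ alone can orient these intra-group edges. The main obstacle I anticipate is the careful bookkeeping of the residualization step---verifying that after regressing on a chosen root the residuals inherit the same ICA-type structure so that the base-case analysis reapplies verbatim---and, relatedly, handling the inherent ambiguity when a leaf has several candidate direct-and-only-direct parents and could legitimately be assigned to more than one group.
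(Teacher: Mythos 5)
Your proposal is correct and follows essentially the same route as the paper: the paper's own proof is a two-line argument that combines Proposition~\ref{Prop_ind_A} (the Darmois--Skitovich reduction of the residual-independence test to the zero pattern of $\mathbf{A}^{\mathrm{NL}}$) with Lemma~1 of \citep{DirectLingam} to conclude that the iteratively discovered ordering of the non-leaf nodes is correct. You simply inline and expand what the paper cites --- re-deriving the DirectLiNGAM root-characterization and making the induction over residualization steps explicit --- and your closing remarks (the ``generically non-zero'' coefficient and the ambiguity of assigning a leaf with several direct-and-only-direct parents) flag exactly the details the paper's terse proof also leaves implicit.
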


The result of Proposition~\ref{non-gaussian:general} applies to any DAG structure in $\tilde{G}$.  Clearly, the indentifiability can be naturally improved if additional assumptions on the causal structure $\tilde{G}$ hold. In particular, to recover information of $\tilde{G}$, it is essential to answer the following questions.
\begin{itemize}
\item Can we determine which nodes in a recursive group are leaf nodes?
\item Can we find the causal edges into a particular node as well as their causal coefficients?
\end{itemize}

Below we will show that under rather mild assumptions, the answers to both questions are yes.



\subsubsection{Identifying Leaf Nodes and Individual Causal Edges}
If for each recursive group we can determine which variable is the non-leaf node, the causal ordering among the variables $\tilde{X}^*_i$ is then fully known. The causal structure in $\tilde{G}^*$ as well as the causal model can then be readily estimated by regression: for a leaf node, its direct causes are those non-leaf nodes that determine it; for a non-leaf node, we can regress it on all non-leaf nodes that precede it according to the causal ordering, and those predictors with non-zero linear coefficients are its parents. (Equivalently, its parents are the nodes that causal precede it and in its Markov blanket.)

Now the problem is whether it is possible to find out which variable in a given recursive group is a leaf node; if all leaf nodes are found, then the remaining one is the (only) non-leaf node.  We may find leaf nodes by ``looking backward" and ``looking forward"; the former makes use of the parents of the variables in the considered group, and the latter exploits the fact that leaf nodes do not have any child.


\begin{proposition} {\bf (Leaf node determination by ``looking backward")} 
Suppose the observed data were generated by the CAMME where assumptions A0 and A4 hold.\footnote{In this non-Gaussian case (implied by assumption A4), the result reported in this proposition may still hold if one avoids the non-deterministic faithfulness assumption and assumes a weaker condition; however, for simplicity of the proof we currently still assume non-deterministic faithfulness.} Let the sample size $N\rightarrow \infty$.  Then if assumption A5 holds, leaf node $O$ is correctly identified from values of $\mathbf{X}$ (more specifically, from the estimated $\mathbf{A}^{\mathrm{NL}}$ or the distribution of $\tilde{\mathbf{X}}^*$); alternatively,  
if assumption A6 holds, leaf nodes $O$ and $Q$ are correctly identified from values of $\mathbf{X}$. 
\begin{itemize}
\item[A5.] According to $\tilde{G}^*$, leaf node $O$ in the considered recursive group, $g^{(k)}$, has a parent which is not a parent of the non-leaf node in $g^{(k)}$.
\item[A6.] According to $\tilde{G}^*$, leaf nodes $O$ and $Q$ in the considered recursive group, $g^{(k)}$, are non-deterministically conditionally independent given some subset of the nodes in $g^{(1)}, g^{(2)}, ..., g^{(k)}$.
\end{itemize}
\label{prop_back}
\end{proposition}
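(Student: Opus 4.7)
The plan is to prove the two sufficient conditions separately, each time turning ``leaf vs.\ non-leaf inside $g^{(k)}$'' into a criterion computable from $\mathbf{A}^{\mathrm{NL}}$ (equivalently, from the joint distribution of $\tilde{\mathbf{X}}^*$). By Proposition~\ref{non-gaussian:general} the groups $g^{(1)},\ldots,g^{(m)}$ and their causal order are already available asymptotically, and within $g^{(k)}$ only $V_k$ is non-leaf while the remaining members are leaves whose direct-and-only-direct parent is $V_k$.

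For the A5 part I would examine, for each $Y\in g^{(k)}$, the support $S(Y)$ of the OLS coefficients obtained by regressing $\tilde{X}_Y^*$ on $(\tilde{X}_{V_1}^*,\ldots,\tilde{X}_{V_{k-1}}^*)$. The structural equation $V_k=\sum_{P\in \mathrm{PA}(V_k)} b_P P + \tilde{E}_{V_k}$, together with causal Markov (so $\tilde{E}_{V_k}$ is independent of the non-descendants $V_1,\ldots,V_{k-1}$), yields $S(V_k)=\mathrm{PA}(V_k)$. For a leaf $O=b_k V_k+\sum_{P\in\mathrm{PA}(O)\setminus\{V_k\}}b'_P P$, substituting the equation for $V_k$ and collecting terms gives
\begin{equation*}
O=\sum_{P\in\mathrm{PA}(V_k)\cup(\mathrm{PA}(O)\setminus\{V_k\})}(b_k b_P + b'_P)\,P+b_k\tilde{E}_{V_k},
\end{equation*}
so generically $S(O)=\mathrm{PA}(V_k)\cup(\mathrm{PA}(O)\setminus\{V_k\})\supseteq \mathrm{PA}(V_k)=S(V_k)$, with strict inclusion precisely when $\mathrm{PA}(O)\setminus\{V_k\}\not\subseteq \mathrm{PA}(V_k)$, which is exactly A5. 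I would then apply the decision rule ``label $Y$ as a leaf whenever $S(Y)\supsetneq S(Z)$ for some $Z\in g^{(k)}$'': this rule never flags $V_k$ (whose support is contained in every other support in $g^{(k)}$) but does flag the specific $O$ guaranteed by A5, witnessed by $Z=V_k$.

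For the A6 part the argument is purely graphical. Since $V_k$ is a direct parent in $\tilde{G}^*$ of every leaf of $g^{(k)}$, $V_k$ is adjacent to every such leaf, and by the contrapositive of non-deterministic faithfulness (A0) no conditioning set can render $V_k$ and a leaf non-deterministically conditionally independent. A6 delivers precisely such a separating set for the pair $(O,Q)$, drawn from the already-identified nodes in $g^{(1)}\cup\cdots\cup g^{(k)}$. Hence the rule ``label both members of any non-deterministically CI pair inside $g^{(k)}$, with separating set searched over $g^{(1)}\cup\cdots\cup g^{(k)}$, as leaves'' cannot misfire on $V_k$ yet correctly labels both $O$ and $Q$.

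The main obstacle I anticipate is ruling out accidental cancellation in Part~1: the coefficient $b_k b_P+b'_P$ for $P\in\mathrm{PA}(V_k)\cap(\mathrm{PA}(O)\setminus\{V_k\})$ could vanish for a non-generic choice of parameters, shrinking $S(O)$ and breaking the strict-containment step. I would dispense with this by arguing that such parameter couplings introduce additional non-graphical conditional independences and are thus excluded by A0, or alternatively state the conclusion as ``generically identifiable'' in the same spirit as Proposition~\ref{iden_2nd}. A routine subsidiary step is noting that the OLS coefficients and their supports are readable from $\mathbf{A}^{\mathrm{NL}}$, since the covariance of $\tilde{\mathbf{X}}^*$ equals $\mathbf{A}^{\mathrm{NL}}(\mathbf{A}^{\mathrm{NL}})^\intercal$ once the $\tilde{\mathbf{E}}^{\mathrm{NL}}$ components are standardized.
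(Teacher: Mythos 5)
Your proposal is correct and follows essentially the same route as the paper's proof: for A5 you regress each member of $g^{(k)}$ on the earlier groups and detect the leaf $O$ by the strict containment $S(O)\supsetneq S(V_k)=\mathrm{PA}(V_k)$ (the paper phrases this via d-connection of $\mathrm{PA}(V_k)$ to $O$ given any set excluding $V_k$, which is exactly its answer to the cancellation worry you raise), and for A6 you use the identical adjacency argument that the non-leaf node is adjacent to every leaf in its group while A0 makes $O$ and $Q$ non-adjacent. The only refinements are cosmetic: you restrict the predictors to the non-leaf nodes of earlier groups (avoiding the collinearity the paper's phrasing would introduce) and you make the genericity/faithfulness step explicit.
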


\paragraph{Example Set 4}
Suppose assumptions A0 and A4 hold. 
\begin{itemize}
\item For $\tilde{G}_A$ in Figure~\ref{fig:illust_D}(a), assumption A6 holds for $\tilde{X}^*_7$ and $\tilde{X}^*_8$ in the recursive group $\{\tilde{X}^*_4, \tilde{X}^*_7, \tilde{X}^*_8\}$: they are non-deterministically conditionally independent given $\{\tilde{X}_2^*, \tilde{X}_4^*\}$; so both of them are identified to be leaf nodes from the estimated $\mathbf{A}^{\mathrm{NL}}$ or the distribution of $\tilde{\mathbf{X}}^*$, and $\tilde{X}^*_4$ can be determined as a non-leaf node. (In addition, assumption A5 holds for $\tilde{X}_8^*$, allowing us to identify this leaf node even if $\tilde{X}^*_7$ is absent in the graph.) 
\item For both $\tilde{G}_C$ and $\tilde{G}_D$ in Figure~\ref{fig:illust_D}(c), $\tilde{X}_6^*$, in the recursive group $\{\tilde{X}_5^*, \tilde{X}_6^*\}$, follows assumption A5 and can be found to be a leaf node from the distribution of $\tilde{X}^*_i$; accordingly, $\tilde{X}_5^*$ has to be a non-leaf node.
\item For $\tilde{G}_E$ in Figure~\ref{fig:illust_D}(d), assumption A5 holds for $\tilde{X}_5^*$ and $\tilde{X}_8^*$, which can then be found to be leaf nodes.
\end{itemize}


We can also determine leaf nodes by looking at the relationships between the considered variables  and the variables causally following them, as stated in the following proposition.
\begin{proposition} {\bf (Leaf node determination by ``looking forward")}
Suppose the observed data were generated by the CAMME where assumptions A0 and A4 hold. Then as the sample size $N\rightarrow \infty$, we can correctly identify the leaf node $U$ in the considered recursive group $g^{(k)}$ from values of $\mathbf{X}$ if assumption A7 holds for it:
\begin{itemize}
\item[A7.] For leaf node $U$ in $g^{(k)}$, there exists at least one node causally following $g^{(k)}$ that 1) is d-separated from $U$ by a subset of variables in  $g^{(1)}, ...,  g^{(k-1)},  g^{(k)}$ which does not include all parents of $U$ and 2) is a child of the non-leaf node in  $g^{(k)}$ .
\end{itemize}
\label{prop_forward}
\end{proposition}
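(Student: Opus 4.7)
My plan is to turn A7 into an operational test on the identified matrix $\mathbf{A}^{\mathrm{NL}}$.  By Lemma \ref{Lemma1_column_L} and Proposition \ref{non-gaussian:general}, $\mathbf{A}^{\mathrm{NL}}$ is asymptotically identified up to column permutation and scaling, and the recursive group decomposition together with the causal ordering among $g^{(1)},\ldots,g^{(K)}$ is already in hand; I therefore work directly with $\mathbf{A}^{\mathrm{NL}}$ and the groups as known quantities.

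I would first lift the independence check of Proposition \ref{Prop_ind_A} to the conditional setting.  For a target $V$ in a later group, a candidate $W\in g^{(k)}$, and a conditioning set $\mathbf{S}\subseteq(g^{(1)}\cup\cdots\cup g^{(k)})\setminus\{W\}$, I regress $V$ on $\mathbf{S}\cup\{W\}$, extract from $\mathbf{A}^{\mathrm{NL}}$ the coefficient vector $\alpha$ of the residual with respect to $\tilde{\mathbf{E}}^{\mathrm{NL}}$, and invoke the Darmois--Skitovich theorem: under A4, non-deterministic conditional independence $V\ci W\,|\,\mathbf{S}$ is equivalent to $|\alpha\circ\mathbf{A}^{\mathrm{NL}}_{W\cdot}|_1=0$ together with $W$ not being a deterministic function of $\mathbf{S}$.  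Under A0 this algebraic condition coincides with d-separation of $V$ and $W$ by $\mathbf{S}$ in $\tilde{G}^*$, so every non-deterministic d-separation in $\tilde{G}^*$ is detectable from $\mathbf{A}^{\mathrm{NL}}$ alone.

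My test is: declare $W\in g^{(k)}$ a leaf if there exists a pair $(V,\mathbf{S})$ with $V$ in a later group and $\mathbf{S}\subseteq(g^{(1)}\cup\cdots\cup g^{(k)})\setminus\{W\}$, where $\mathbf{S}$ does not contain all parents of $W$, such that (i) $V\ci W\,|\,\mathbf{S}$ holds non-deterministically and (ii) $V\nci W'\,|\,\mathbf{S}$ for some other $W'\in g^{(k)}$.  For $W=U$, A7 provides exactly such a witness: the guaranteed $V$ is a child of the non-leaf $M$ in $g^{(k)}$; the conditioning set $\mathbf{S}$ d-separates $V$ from $U$ while omitting at least one parent of $U$ (giving (i)); and the direct edge $M\to V$ is unblocked by any $\mathbf{S}$ not containing $V$, so $V\nci M\,|\,\mathbf{S}$ (giving (ii) with $W'=M$).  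For $W=M$, one argues the reverse direction: any $\mathbf{S}$ that renders $V\ci M\,|\,\mathbf{S}$ must block every channel by which a later-group $V$ can reach $g^{(k)}$, because the leaves $W'\in g^{(k)}$ have no outgoing edges and $M$ is their direct-and-only-direct parent; the same $\mathbf{S}$ then blocks $V$ from every $W'\in g^{(k)}$, so (ii) fails and $M$ is not mistakenly flagged.

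The chief obstacle I anticipate is making the structural argument for $W=M$ fully rigorous: one must show that conditional on $\mathbf{S}\subseteq(g^{(1)}\cup\cdots\cup g^{(k)})\setminus\{M\}$, any d-connecting path from a later-group $V$ to a leaf $W'\in g^{(k)}$ either ends with the edge $M\to W'$ (in which case blocking $V$ from $M$ blocks this path) or terminates at some other parent $P$ of $W'$ which, together with A0 and the fact that leaves of $g^{(k)}$ are deterministic linear combinations of their parents, yields a parallel d-connecting path from $V$ to $M$.  Establishing this rerouting carefully is the technical core of the proof and is where I would invest the most care; once it is in place, the test separates $U$ from $M$ and identifies $U$ as a leaf.
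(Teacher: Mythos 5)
Your completeness direction (the leaf $U$ satisfying A7 does pass your test) tracks the paper's argument, and reading conditional independence off $\mathbf{A}^{\mathrm{NL}}$ via Darmois--Skitovich is also how the paper operationalizes things. But your soundness direction fails, and it fails exactly at the step you flagged as the ``technical core'': the rerouting claim is false. A d-connecting path from a later-group node $V$ to a leaf $W'\in g^{(k)}$ can terminate at a parent $P\neq M$ of $W'$ without yielding any d-connecting path from $V$ to $M$, because the only route from $P$'s side to $M$ passes through the collider $P\rightarrow W'\leftarrow M$, which is blocked when $W'$ (a childless leaf) is not conditioned on. Concretely, take $Q\rightarrow M\rightarrow W'$, $P\rightarrow W'$, $P\rightarrow V$, with $V$ a non-leaf node whose group follows $g^{(k)}=\{M,W'\}$ (discovery order $Q,P,M,V$ makes this a valid recursive group decomposition, since $W'$ becomes deterministic only after $M$). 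With $\mathbf{S}=\emptyset$: $V\ci M$ non-deterministically (the path $V\leftarrow P\rightarrow W'\leftarrow M$ is blocked at the collider $W'$), $\emptyset$ omits $M$'s parent $Q$, and $V\nci W'$ (via $V\leftarrow P\rightarrow W'$). Your test therefore declares the non-leaf node $M$ to be a leaf. One can further add $M\rightarrow V_3$ with $V_3$ in a later group so that A7 genuinely holds for $W'$; the false positive on $M$ persists, so this is a counterexample under the proposition's actual hypotheses, not just outside them.

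The root cause is that your relevance filter (ii) on the witness $V$ --- dependence on some other group member \emph{under the same conditioning set} $\mathbf{S}$ --- is strictly weaker than what the paper uses. The paper first restricts attention to the set $\mathbf{S}$ of later-group nodes that are \emph{always} non-deterministically dependent on at least one member of $g^{(k)}$, quantifying over \emph{all} admissible conditioning subsets of $g^{(1)}\cup\cdots\cup g^{(k)}$; it then proves that the non-leaf node is always non-deterministically dependent on \emph{every} element of that set (via the contradiction argument using $\mathbf{R}_1'\cup(\mathrm{PA}(U')\setminus W)$), while a leaf satisfying A7 can be non-deterministically separated from some element of it. In my example the offending $V$ is excluded from the paper's witness set because it can be separated from $W'$ by $\{P\}$ and from $M$ by $\emptyset$ (different sets for different nodes), so the paper's procedure never misfires. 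To repair your proof you would need to replace (ii) with the paper's ``always dependent'' quantification, or otherwise prevent witnesses whose connection to the group runs entirely through a non-$M$ parent of a leaf; the single-$\mathbf{S}$ version cannot be patched by the rerouting lemma, since that lemma is false.
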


\paragraph{Example Set 5}

Suppose assumptions A0 and A4 hold. 
\begin{itemize}
\item For data generated by $\tilde{G}_A$ in Figure~\ref{fig:illust_D}(a), we already found $\tilde{X}_4^*$ in recursive group $\{\tilde{X}^*_4, \tilde{X}^*_7, \tilde{X}^*_8\}$ to be a non-leaf node because of Proposition \ref{prop_back}. Proposition \ref{prop_forward} further indicates that $\tilde{X}^*_2$ (in group $\{\tilde{X}^*_2, \tilde{X}^*_5\}$) and $\tilde{X}^*_3$ (in group $\{\tilde{X}^*_3, \tilde{X}^*_6\}$) are non-leaf nodes, and all leaf nodes are identified. 
\item For  $\tilde{G}_B$ in Figure~\ref{fig:illust_D}(b), there is only one recursive group, and it does not provide further information by looking ``backward" or ``forward", and it is impossible to find the non-leaf node with Proposition \ref{prop_back} or \ref{prop_forward}.
\item For both $\tilde{G}_C$ and $\tilde{G}_D$ in Figure~\ref{fig:illust_D}(c), $\tilde{X}^*_6$ was found to be a leaf node due to Proposition \ref{prop_back}; thanks to Proposition~\ref{prop_forward}, the other leaf node, $\tilde{X}^*_3$, was also detected. In particular, in $\tilde{G}_C$, for leaf node $\tilde{X}^*_3$ both $\tilde{X}^*_4$ and $\tilde{X}^*_6$ satisfy the two conditions in assumption~\ref{prop_forward}; however, in $\tilde{G}_D$, for leaf node $\tilde{X}^*_3$ only $\tilde{X}^*_4$ satisfies them.    All leaf nodes were successfully found. 
\item For $\tilde{G}_E$ in Figure~\ref{fig:illust_D}(d), Proposition \ref{prop_back} already allows us to identify leaf nodes $\tilde{X}_5^*$ and $\tilde{X}_8^*$. Because assumption A7 holds for $\tilde{X}^*_4$ (for it $\tilde{X}_7^*$ satisfies the two conditions), we can further identify this leaf node.
\end{itemize}
For contaminated data generated by any of $\tilde{G}_A$, $\tilde{G}_C$, $\tilde{G}_D$, and $\tilde{G}_E$, now we can find all leaf nodes in the measurement-error-free causal model. 
One can then immediately estimate the whole measurement-error-free model, as seen next.


The above two propositions are about the identifiably of leaf nodes in the measurement-error-free causal model. By applying them to all leaf nodes, we have the (sufficient) conditions under which the causal graph of $\tilde{G}$ is fully identifiable.

\begin{proposition} {\bf (Full identifiability)}
Suppose the observed data were generated by the CAMME where assumptions A0 and A4 hold. Assume that for each leaf node in $\tilde{G}^*$, at least one of the three assumptions, A5, A6, and A7, holds. Then as the sample size $N\rightarrow \infty$, the causal structure in $\tilde{G}$ is fully identifiable from the contaminated observations.
\label{Prop:full}
\end{proposition}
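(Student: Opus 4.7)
The plan is to chain together the three identification results already established in this section. First, I would invoke Lemma~\ref{Lemma1_column_L} to recover, under A4, the mixing matrix $\mathbf{A}^{\mathrm{NL}}$ (and hence the joint distribution of $\tilde{\mathbf{X}}^*$) up to column permutation and rescaling from the observations of $\mathbf{X}$, and then apply Proposition~\ref{non-gaussian:general} to produce the recursive group decomposition $\mathcal{G}_{\tilde{G}^*} = (g^{(1)} \to g^{(2)} \to \cdots \to g^{(n-l)})$ together with the correct causal ordering among groups. At this stage each group already contains exactly one non-leaf node, but we do not yet know which one.

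Next, I would resolve the leaf/non-leaf assignment within each group. By hypothesis, every leaf node of $\tilde{G}^*$ satisfies at least one of A5, A6, A7, so Propositions~\ref{prop_back} and~\ref{prop_forward} correctly flag every such leaf node from $\mathbf{A}^{\mathrm{NL}}$. Since each group $g^{(k)}$ contains exactly one non-leaf node, the remaining un-flagged node in each group must be the non-leaf node; this delivers a full causal ordering over the variables together with leaf/non-leaf labels.

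Finally, I would reconstruct every edge of $\tilde{G}$. Because a leaf node has no children, all parents of any node are non-leaf nodes. For each non-leaf node $\tilde{X}_i$, I would read off, from $\mathbf{A}^{\mathrm{NL}}$, the linear regression of $\tilde{X}^*_i$ on all preceding non-leaf nodes; since $\tilde{E}_i$ is independent of those non-descendant predictors, the non-zero population coefficients are exactly the entries $B_{ij}$ for $j\in\mathrm{PA}(\tilde{X}_i)$. For each leaf node $\tilde{X}_j$, the row of $\mathbf{A}^{\mathrm{NL}}$ associated with $\tilde{X}^*_j$ equals $\sum_{i\in \mathrm{PA}(\tilde{X}_j)} B_{ji} \, \mathbf{A}^{\mathrm{NL}}_{i\cdot}$; the $(n-l)\times(n-l)$ submatrix of $\mathbf{A}^{\mathrm{NL}}$ indexed by non-leaf rows is invertible (it equals $(\mathbf{I}-\mathbf{B})^{-1}$ restricted to the acyclic, non-degenerate non-leaf subsystem), so this expansion is unique and returns $\mathrm{PA}(\tilde{X}_j)$ along with the $B_{ji}$. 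All operations use only $\mathbf{A}^{\mathrm{NL}}$, so they are feasible from observations of $\mathbf{X}$.

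The main obstacle I anticipate is the careful verification of this last stage: one must check that regressing a non-leaf node on all preceding non-leaf nodes genuinely zeros out non-parent ancestors (which uses the causal Markov property together with the fact that all predictors are non-descendants of $\tilde{X}_i$), and that the non-leaf submatrix of $\mathbf{A}^{\mathrm{NL}}$ has linearly independent rows (a structural consequence of acyclicity of $\tilde{G}$ and the non-zero variances of $\tilde{E}_i$). Every other step is a direct invocation of a result already proved in the excerpt.
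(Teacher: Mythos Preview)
Your proposal is correct and follows essentially the same route as the paper's own proof: invoke the recursive group decomposition, use Propositions~\ref{prop_back} and~\ref{prop_forward} under A5/A6/A7 to label every leaf node (hence the unique non-leaf node in each group), and then recover parents by regression on preceding non-leaf nodes for non-leaf targets and by the unique linear expansion in non-leaf rows of $\mathbf{A}^{\mathrm{NL}}$ for leaf targets. The paper's proof is considerably terser and simply asserts the final regression step, whereas you have correctly identified and sketched the two structural facts (Markov property for non-leaf regressions, invertibility of the non-leaf submatrix) that make that step valid.
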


In the general case, the causal structure in $\tilde{G}$ might not be fully identifiable, and the above propositions may allow partial identifiability of the underlying causal structure. Roughly speaking, the recursive group decomposition is identifiable in the non-Gaussian case; 
with Propositions \ref{prop_back} and \ref{prop_forward} one can further identify some leaf nodes as well as their parents.

\section{Conclusion and Discussions}


For variables of interest in various fields, including social sciences, neuroscience, and biology, the measured values are often contaminated by additional measurement error. Unfortunately, the outcome of existing causal discovery methods is sensitive to the existence of measurement error, and it is desirable to develop causal discovery methods that can estimate the causal model for the measurement-error-free variables without using much prior knowledge about the measurement error. To this end, this paper is concerned with the identifiability conditions for the underlying measurement-error-free causal model given contaminated observations. We have shown that under various conditions, the causal model is partially or even fully identifiable. 

Table~\ref{Summary_table} summarizes the identifiability results presented in this paper. Propositions \ref{coro1} and \ref{Prop:equivalence} make use of second-order statistics of the data, and Propositions~\ref{prop:recover_ica} to~\ref{Prop:full} further exploit non-Gaussianity of the data. 
The identifiability conditions reported in this paper are sufficient and might not be necessary. Below are some high-level interpretations of the conditions.
\begin{itemize}
\item Roughly speaking, in the Gaussian case (Propositions \ref{coro1} and \ref{Prop:equivalence}) the identifiability conditions are mainly about the number of leaf nodes in the measurement-error-free causal model and the sparsity of the underlying causal structure. The measurement-error-free causal model may be identifiable up to its equivalence class.
\item In the non-Gaussian case, the conditions for full identifiability are mainly about the sparsity of the measurement-error-free causal structure. 
\item In the non-Gaussian case, the identifiability conditions of the recursive group decomposition (including causal ordering between groups) are rather general (Proposition \ref{prop:recover_ica}). 
\item The identifiability of the measurement-error-free causal model may greatly benefit from additional knowledge about the measurement error (e.g, that all measurement errors have the same variance, as discussed in Propositions \ref{coro1} and~\ref{prop:recover_ica}).
\item Suppose assumptions A0 and A4 hold (the non-Gaussian case is considered); without additional knowledge about the measurement error, 
we conjecture that the necessary and sufficient condition for the non-leaf node to be identifiable is that at least one of the  three assumptions, A5, A6, and A7, holds. To falsify or prove this conjecture is part of our future work.
\end{itemize}

\begin{table*}[htp!]
 \centering
{
\centering
\caption{Summary of the identifiability results.}
  \begin{tabular}{|l || m{4cm} | m{7.2cm} |}
   \hline 
      Proposition $\#$     & Assumptions & What information of $\tilde{G}$ is identifiable? \\ \hline \hline
     Proposition \ref{coro1}   & A0, A1, and A2 & up to the equivalence class; leaf nodes identifiable \\ \hline 
          Proposition \ref{Prop:equivalence}   & A0, A1, and A3 & up to the equivalence class\\ \hline \hline
          
     Proposition~\ref{prop:recover_ica}   & A0, A4, A1, and A2  & Fully identifiable  \\ \hline
          Proposition~\ref{non-gaussian:general}   & A0 and A4 & Recursive group decomposition (including causal ordering between the groups) \\ \hline
                   Proposition~\ref{prop_back}   & A0, A4, and A5 {\it or}  A6 for some leaf nodes in $\tilde{G}^*$ & Recursive group decomposition; the leaf nodes \\ \hline
                   Proposition~\ref{prop_forward}   & A0, A4, and A7  for some leaf nodes in $\tilde{G}^*$ & Recursive group decomposition; the leaf nodes \\ \hline
                   Proposition~\ref{Prop:full}   & A0, A4, and A5 {\it or} A6 {\it or} A7 for each leaf node & Fully identifiable \\ \hline
   \end{tabular}
}
\label{Summary_table}
\end{table*}

We note that in principle, all assumptions except A0 (regarding the causal Markov condition and non-deterministic faithfulness assumption related to causal model $\tilde{G}^*$) are testable from the observed data. This suggests that it is possible to develop practical causal discovery methods to deal with measurement error that are able to produce reliable information at least in the asymptotic case. 



We have verified the validity of the algorithms briefly given in the paper on large samples, including $\texttt{FA+EquVar}$ and $\texttt{FA+DPC}$ outlined in Sections \ref{Sec:G_CAMME_same} and \ref{Sec:G_CAMME_general}, respectively, and the procedure to find recursive group decomposition given in Section \ref{Sec:NG_general}.  All of them are two-step methods: in the first step, the first two methods apply factor analysis on the data, and the last procedure applies over-complete independent component analysis; in the second step all the methods do causal discovery with subsequent analysis on the estimated information of the canonical representation of the CAMMA. These methods might not be statistically efficient for the purpose of causal discovery because of estimation errors in the first step. We are currently studying their behavior on finite samples and aim at developing statistically more efficient algorithms.  Such methods are also expected to be able to learn the optimal number of leaf nodes in the causal graph (in this paper we assume this number is given).

It is worth noting that various kinds of background knowledge of the causal model may further help improve the identifiability of the measurement-error-free causal model. For instance, if one knows that all causal coefficients are smaller than one in absolute value, then the measurement-error-free causal model in Figure~\ref{fig:illust_D}(b) is immediately identifiable from contaminated data.  Our future research also includes establishing identifiability conditions that allow cycles in the measurement-error-free causal model and developing efficient methods for particular cases where each measurement-error-free variable has multiple measured effects or multiplied measurement-error-free variables generate a single measured effect.









\newpage
\section*{Appendix: Proofs}

\subsection*{A.1: Proof of Proposition \ref{coro1}}
\begin{proof}
According to Proposition~\ref{iden_2nd}, the variances of $E_i^*$ are identifiable. If they are not identical, then their smallest value is the variance of measurement errors $E_i$.  If the estimated variance of $E^*_j$ is greater than the smallest value, 
according to the definition of $E^*_i$ in (\ref{newE}), $\tilde{X}_j$ must be a leaf node in $\tilde{G}$. There are in total three types of edges in $\tilde{G}$. 
\begin{enumerate}
\item First consider the edges between leaf nodes. According to the definition of leaf nodes, there is no edge between them. Since we know which nodes are leaf nodes, it is guaranteed that there is no edge between them.
\item Then consider the edges between non-leaf nodes. Remove all the leaf nodes, and we have the set of non-leaf nodes, which is causally sufficient. The subgraph of $\tilde{G}$ over this set of variables satisfies the causal Markov condition and the faithfulness assumption. Then by applying any consistent constraint-based causal discovery method, such as the PC algorithm, this subgraph is corrected identified up to its equivalence class as $N \rightarrow \infty$. That is, the edges between non-leaf nodes are correctly identified.
\item Finally consider the edges between leaf nodes and non-leaf nodes. Each leaf node $\tilde{X}_i^*$ is a deterministic, linear function of its direct causes, which are among non-leaf nodes. Denote by the set of the direct causes of $\tilde{X}_i^*$ by $S_i$.  Recall that the covariance matrix of non-leaf nodes is non-singular. As a consequence, $\tilde{X}_i^*$, as a linear combination of elements of $S_i$, cannot be represented as a deterministic, linear combination of other non-leaf nodes; otherwise some leaf node can be written as a deterministic, linear function of other non-leaf nodes, leading to a contradiction. As a result, $S_i$ is identifiable and, as a consequence, the edges between each leaf node and non-leaf nodes are identifiable.
\end{enumerate}
Therefore, $\tilde{G}$ is identifiable at least up to its equivalence class. Furthermore, we know that for all edges between leaf nodes and non-leaf nodes, the direction points to the leaf nodes. 
\end{proof}

\subsection*{A.2: Proof of Proposition \ref{Prop:equivalence}}
\begin{proof}
The deterministic relations among $\tilde{X}_i^*$ and their conditional independence/dependence relations can be seen from the covariance matrix of $\tilde{X}_i^*$. First, notice that the conditional independence relations detected by DPC are correct but may not be complete. Hence, asymptotically speaking, the edges removed by DPC do not exist in the true graph, i.e., DPC does not produce any missing edge.

We then note that there is no deterministic relation among non-leaf nodes and that two non-leaf nodes in $\tilde{G}$ which are not adjacent are d-separated by a proper subset of other non-leaf nodes; as a consequence, DPC will successfully detect the conditional independence relationships as well as the edges between non-leaf nodes. Therefore, the edges between all non-leaf nodes are identifiable.

Next, we shall consider whether DPC is able to correctly identify the edges between leaf nodes (which are actually not adjacent in $\tilde{G}$) and those between leaf nodes and non-leaf nodes. First consider the relationships between leaf nodes.
According to the former part of assumption A3, all paths between leaf nodes $\tilde{X}_j$ and $\tilde{X}_k$ that go through at most one of  $\tilde{X}_p$ and $\tilde{X}_q$ are blocked by $ (\mathrm{PA}(\tilde{X}_j) \setminus \tilde{X}_p) \cup (\mathrm{PA}(\tilde{X}_k) \setminus \tilde{X}_q)$; the  paths that go through both of $\tilde{X}_p$ and $\tilde{X}_q$ are blocked by $\mathbf{S}_1$. Therefore, all paths between $\tilde{X}_j$ and $\tilde{X}_k$ are blocked by 
$ (\mathrm{PA}(\tilde{X}_j) \setminus \tilde{X}_p) \cup (\mathrm{PA}(\tilde{X}_k) \setminus \tilde{X}_q) \cup \mathbf{S}_1$, which does not deterministically determine $\tilde{X}_j$ or $\tilde{X}_k$, so such an conditional independence relationship is detected by DPC. Hence each pair of leaf nodes in $\tilde{G}$, $\tilde{X}_j$ and $\tilde{X}_k$,  are not adjacent in the output of DPC.

Finally, we consider the d-separation relationships between each leaf node $\tilde{X}_j$ and its non-adjacent non-leaf node $\tilde{X}_i$. According to the latter part of assumption A3, all paths between $\tilde{X}_j$ and $\tilde{X}_i$ that do not go through $\tilde{X}_r$ are blocked by $\mathrm{PA}(\tilde{X}_j)\setminus \tilde{X}_r$, and the paths between them that go through $\tilde{X}_r$ are blocked by $\mathbf{S}_2$. Therefore, all paths between $\tilde{X}_j$ and $\tilde{X}_i$ are blocked by $(\mathrm{PA}(\tilde{X}_j)\setminus \tilde{X}_r)\cup \mathbf{S}_2$. That is, the outcome of DPC does not contain any extra edge between leaf nodes and non-leaf nodes.

Hence, the skeleton given by DPC is the same as the skeleton of $\tilde{G}$ under assumptions A0, A1, and A3.  Furthermore, according to Lemma 4 in~\citep{Luo06}, DPC correctly identifies all colliders in $\tilde{G}$. Therefore, under such assumptions $\tilde{G}$ is recovered up to its equivalence class.
\end{proof}

\subsection*{A.3: Proof of Proposition \ref{prop:recover_ica}}
\begin{proof}
Denoted by $\hat{\mathbf{A}}^{\mathrm{NL}}$ the estimate of $\mathbf{A}^{\mathrm{NL}}$. 
First note that according to Corollary~\ref{coro1}, leaf nodes in $\tilde{G}$ are identifiable. Then for each leaf node $\tilde{X}_l$, find the combination of the non-leaf nodes which determines $\tilde{X}_l$ in terms of $\hat{\mathbf{A}}^{\mathrm{NL}}$, which is achieved by finding the combination of the rows of  $\hat{\mathbf{A}}^{\mathrm{NL}}$ corresponding to non-leaf nodes to determine the $l$-th row of $\hat{\mathbf{A}}^{\mathrm{NL}}$. All nodes involved in this combination are direct causes of $\tilde{X}_l$. The solution in this step is unique because the rows of  $\hat{\mathbf{A}}^{\mathrm{NL}}$ corresponding to non-leaf nodes are linearly independent, as implied by the non-deterministic relations among $\tilde{X}_i$ (or non-zero variances of $\tilde{E}_i$). 

We have found all edges between leaf nodes and non-leaf nodes, and the remaining edges are those between non-leaf nodes. If we remove all leaf nodes and edges into them from $\tilde{G}$, we have the causal graph over non-leaf nodes and the graph is still acyclic, and the set of non-leaf nodes is causally sufficient. Denote by $\hat{\mathbf{A}}^{\mathrm{NL}}_s$ the matrix consisting of the  rows of  $\hat{\mathbf{A}}^{\mathrm{NL}}$ corresponding to non-leaf nodes. According to the identifiability of LiNGAM~\citep{Shimizu06}, the causal relations among non-leaf nodes are uniquely determined by $\hat{\mathbf{A}}^{\mathrm{NL}}_s$ or its inverse.
\end{proof}

\subsection*{A.4: Proof of Proposition \ref{Prop_ind_A}}
\begin{proof}
Both $R_{j\leftarrow i}$ and $\tilde{{X}}_i^*$ are linear mixtures of independent, non-Gaussian variables $\tilde{E}_i$. According to the Darmois-Skitovich theorem~\citep{Kagan73},  $R_{j\leftarrow i}$ and $\tilde{X}^*_i$ are statistically independent if and only if the for any $k$, at most one of the $k$th entries of their coefficient vectors, $\alpha_{j\leftarrow i}$ and $ \mathbf{A}^{\mathrm{NL}}_{i\cdot}$, is non-zero, which is equivalent to the condition (\ref{Eq:check_ind_asym}).
\end{proof}

\subsection*{A.5: Proof of Proposition \ref{non-gaussian:general}}
\begin{proof}
In the constructed recursive group decomposition, each group has one and only one non-leaf node. Just consider the non-leaf nodes in the recursive decomposition. Combining Lemma 1 in~\citep{DirectLingam} and Proposition~\ref{Prop_ind_A}, one can see that the discovered causal ordering among them must be correct. 
\end{proof}

\subsection*{A.6: Proof of Proposition \ref{prop_back}}
\begin{proof}
In each recursive group, there is a single non-leaf node, and all the others are leaf-nodes. Denote by $\tilde{X}^{*(k)}_q$ the $q$th node in the recursive group $g^{(k)}$.  
Denoted by $\tilde{X}^{*(k)}_{\mathrm{NL}}$ the only non-leaf node in the recursive group $g^{(k)}$.  Denote by $\mathrm{PA}(\tilde{X}^{*(k)}_{\mathrm{NL}})$ the set of direct causes of $\tilde{X}^{*(k)}_{\mathrm{NL}}$ in $\tilde{G}^*$. 

First consider assumption A5. Let us regress each variable $\tilde{X}^{*(k)}_q$ in this group on all variables $\tilde{X}^*_i$ in the first $(k-1)$ recursive groups; in this regression task, all predictors are causally earlier than $\tilde{X}^{*(k)}_q$ because of the identifiable causal ordering among the recursive groups.  Although the realizations of variables $\tilde{X}^*_i$ are unknown, such regression models can be estimated from the estimated matrix $\hat{\mathbf{A}}^{\mathrm{NL}}$, as done in Section~\ref{Eq:procedure}, or by analyzing the estimated covariance matrix of $\tilde{\mathbf{X}}^*$, which is $\hat{\mathbf{A}}^{\mathrm{NL}} \hat{\mathbf{A}}^{NL\intercal}$ (we have assumed that $\mathbb{V}ar (\tilde{E}^{\mathrm{NL}}_i) = 1$ without loss of generality). There are two possible cases to consider.
\begin{itemize}
\item[i)] For the non-leaf node $\tilde{X}^{*(k)}_{\mathrm{NL}}$ in the $k$th group, all predictors with non-zero coefficients are its direct causes, and their set is $\textrm{PA}(\tilde{X}^{*NL}_{(k)})$.
\item[ii)] Then consider a leaf node in this group, $\tilde{X}^{*(k)}_{q'}$, $\tilde{X}^{*(k)}_{q'} \neq \tilde{X}^{*(k)}_{\mathrm{NL}}$. Recall that when regressing $\tilde{X}^{*(k)}_{q'}$ on the variables in causally earlier recursive groups, $\tilde{X}^{*(k)}_{\mathrm{NL}}$ is not among the predictors because it is also in the $k$th group. First note that each node in $\tilde{X}^{*(k)}_{\mathrm{NL}}$ is always d-connected to $\tilde{X}^{*(k)}_{q'}$ given any variable set that does not include $\tilde{X}^{*(k)}_{\mathrm{NL}}$. As a consequence, in the regression model for $\tilde{X}^{*(k)}_{q'}$, all predictors in $\mathrm{PA}(\tilde{X}^{*(k)}_{\mathrm{NL}})$ have non-zero coefficients, so all predictors with non-zero coefficients form a superset of $\mathrm{PA}(\tilde{X}^{*(k)}_{\mathrm{NL}})$.  
Furthermore, under assumption A5, $O$ has at least a direct cause that is not in $\textrm{PA}(\tilde{X}^{*(k)}_{\mathrm{NL}})$. Therefore, in the regression model for $O$, the set of predictors with non-zero coefficients is a proper superset of $\textrm{PA}(\tilde{X}^{*NL}_{(k)})$ (the former has more elements). 
\end{itemize}
That is, when regressing variables $\tilde{X}^{*(k)}_q$ in the considered group on variables in earlier groups, the non-leaf node, as well as possibly some of the leaf nodes, always has a smaller number of predictors with non-zero coefficients, compared to the model for leaf node $O$. Hence we can determine $O$ as a leaf node.

Then consider assumption A6. According to assumption A0, A6 implies that $O$ and $Q$ are d-separated by a proper subset of the variables. Consequently, they are not adjacent in $\tilde{G}^*$. 
Bear in mind that the non-leaf node in $g^{(k)}$ is adjacent to all leaf nodes in the same group in $\tilde{G}^*$. Therefore both $O$ and $Q$ are leaf nodes.
\end{proof}

\subsection*{A.7: Proof of Proposition \ref{prop_forward}}
\begin{proof}
We note that the recursive group decomposition can be correctly identified from the values of $\mathbf{X}$ as $N \rightarrow \infty$, as implied by Proposition \ref{non-gaussian:general}.  
Denote by $W$ the non-leaf node in $g^{(k)}$. 
Let us first find a subset of the nodes causally following $g^{(k)}$ in which each node, denoted by $S$, is always non-deterministically dependent on at least one of the nodes in $g^{(k)}$ relative to $g^{(1)} \cup g^{(2)} \cup ... \cup g^{(k)} \cup {S}$. Denoted by $\mathbf{S}$ this set of nodes. If assumption A7 holds, $\mathbf{S}$ is non-empty. 

We then see that the non-leaf node $W$ is always non-deterministically dependent on every node in $\mathbf{S}$. Suppose this is not the case, i.e., there is $S \in \mathbf{S}$ which is non-deterministically independent from $W$ given a subset of $g^{(1)} \cup g^{(2)} \cup ... \cup g^{(k)}$. Denote by $\mathbf{R}_1$ this subset. 
If $\mathbf{R}_1$ contains any leaf nodes \textcolor{black}{in $\tilde{G}^*$}, let us remove those leaf nodes from  $\mathbf{R}_1$ and denote by  $\mathbf{R}'_1$ the resulting variable set. Further note that  
 $S$ and $W$ are still de-separated by  $\mathbf{R}'_1$. 
 Then $U'$, a leaf node in $g^{(k)}$, is always d-separated from $S$ given  $\mathbf{R}_1' \cup (\mathrm{PA}(U')\setminus W)$.  Since all nodes in $\mathbf{R}_1' \cup (\mathrm{PA}(U')\setminus W)$ are non-leaf nodes, $W$ can not be represented as their linear combination; thus $U'$ is not their deterministic function. Furthermore, $S$ is not a deterministic function of nodes in $\mathbf{R}_1' \cup (\mathrm{PA}(U')\setminus W)$ either; otherwise, according to the construction procedure of the recursive group decomposition, $S$ will belong to $g^{(1)} \cup g^{(2)} \cup ... \cup g^{(k)}$ because all elements of $\mathbf{R}_1' \cup (\mathrm{PA}(U')\setminus W)$ belong to it. Hence any leaf node $U'$ in $g^{(k)}$ will be non-deterministically independent from $S$ relative to $g^{(1)} \cup g^{(2)} \cup ... \cup g^{(k)} \cup {S}$, so $S$ is 
non-deterministically independent from every node in $g^{(k)}$ given a subset of $g^{(1)} \cup g^{(2)} \cup ... \cup g^{(k)}$. That is, 
 $S \notin \mathbf{S}$, leading to a contradiction.
 
 Next, we show that for  leaf node $U$ in $g^{(k)}$, there exists at least one element of $\mathbf{S}$ which is non-deterministically conditionally independent from $U$ given a subset of $g^{(1)} \cup g^{(2)} \cup ... \cup g^{(k)}$. Denote by $V$ one of the nodes that causally follow $g^{(k)}$ and satisfy the two conditions in assumption A7. Because of condition 2), $V \in \mathbf{S}$. Condition 1) states that $V$ and leaf node $U$ are d-separated by a subset of variables in  $g^{(1)}, ...,  g^{(k-1)},  g^{(k)}$ that does not include all parents of $U$.  
 Denote by $\mathbf{R}_2$ this variable set. If $\mathbf{R}_2$ contains any leaf nodes \textcolor{black}{in $\tilde{G}^*$}, remove them from  $\mathbf{R}_2$ and denote by  $\mathbf{R}_2'$ the resulting variable set. $V$ and $U$ are still de-separated by  $\mathbf{R}_2'$, but all elements of  $\mathbf{R}_2'$ are non-leaf nodes. Because all non-leaf nodes \textcolor{black}{in $\tilde{G}^*$} are linearly independent, the parents of $U$ that are not in $\mathbf{R}_2'$ can not be written as linear combinations of the elements of $\mathbf{R}_2'$. Therefore, $U$ is not a deterministic function of $\mathbf{R}_2'$. Moreover, $V$ is not a deterministic function of $\mathbf{R}_2'$ either, because otherwise $V$ will not in groups causally following $g^{(k)}$. This means that leaf node $U$ is non-deterministically independent from $V$, as an element of $\mathbf{S}$, given $\mathbf{R}_2'$. 

That is, we can distinguish between leaf node $U$ and the non-leaf node in the same recursive group by checking non-deterministic conditional independence relationships in $\tilde{X}_i^*$.

\end{proof}

\subsection*{A.8: Proof of Proposition \ref{Prop:full}}
\begin{proof}
First note that under the assumptions in the proposition, the recursive group decomposition is identifiable, and all leaf nodes are asymptotically identifiable. 
The causal ordering among the variables $\tilde{X}^*_i$ is then fully known. The causal graph $\tilde{G}$ as well as the causal model can then be readily estimated by regression: for a leaf node, its direct causes are those non-leaf nodes that determine it; for a non-leaf node, we can regress it on all non-leaf nodes that causally precede it according to the causal ordering, and those predictors with non-zero linear coefficients are its parents. 
\end{proof}

\end{document}